\newtheorem{theorem}{Theorem}
\newtheorem{remark}{Remark}
\newtheorem{lemma}{Lemma}
\newenvironment{myproof}[1][\proofname]{%
  \proof[#1]%
}{\endproof}
\def\N{{\mathbb N}}
\def\Z{{\mathbb Z}}
\def\R{{\mathbb R}}
\def\C{{\mathbb C}}
\def\1{{\mathds{1}}}
\begin{document}

\title{Discretization error cancellation in electronic structure calculation: \\ a quantitative study}

\author{Eric Canc\`es\footnote{\texttt{cances@cermics.enpc.fr}}}
\affiliation{CERMICS, Ecole des Ponts and INRIA Paris, 6 \& 8 Avenue Blaise Pascal, 77455 Marne-la-Vall\'ee, France}

\author{Genevi\`eve Dusson\footnote{\texttt{dusson@ljll.math.upmc.fr}}}
\affiliation{Sorbonne Universit\'es, UPMC Univ. Paris 06 and CNRS, UMR 7598, Laboratoire Jacques-Louis Lions, F-75005, Paris, France, and Sorbonne Universit\'es, UPMC Univ. Paris 06,  Institut du Calcul et de la Simulation, F-75005, Paris, France}

\date{\today}

\begin{abstract} 
It is often claimed that error cancellation plays an essential role in quantum chemistry and first-principle simulation for condensed matter physics and materials science. Indeed, while the energy of a large, or even medium-size, molecular system cannot be estimated numerically within chemical accuracy (typically 1 kcal/mol or 1 mHa), it is considered that the energy difference between two configurations of the same system can be computed in practice within the desired accuracy.

The purpose of this paper is to provide a quantitative study of discretization error cancellation. The latter is the error component due to the fact that the model used in the calculation (e.g. Kohn-Sham LDA) must be discretized in a finite basis set to be solved by a computer. We first report comprehensive numerical simulations performed with Abinit~\cite{Gonze2009,Gonze2002} on two simple chemical systems, the hydrogen molecule on the one hand, and a system consisting of two oxygen atoms and four hydrogen atoms on the other hand. We observe that errors on energy differences are indeed significantly smaller than errors on energies, but that these two quantities asymptotically converge at the same rate when the energy cut-off goes to infinity. We then analyze a simple one-dimensional periodic Schr\"odinger  equation with Dirac potentials, for which analytic solutions are available. This allows us to explain the discretization error cancellation phenomenon on this test case with quantitative mathematical arguments.
\end{abstract}

\pacs{}

\maketitle

\section{Introduction}

Error control is a central issue in molecular simulation. The error between the computed value of a given physical observable (e.g. the dissociation energy of a molecule) and the exact one, has several origins. First, there is always a discrepancy between the physical reality and the reference model, here the $N$-body Schr\"odinger equation, possibly supplemented with Breit terms to account for relativistic effects. However, at least for the atoms of the first three rows of the periodic table, this reference model is in excellent agreement with experimental data, and can be considered as exact in most situations of interest. The overall error is therefore the sum of the following components:
\begin{enumerate}
\item the {\em model error}, that is the difference between the value of the observable for the reference model, which is too complicated to solve in most cases, and the value obtained with the chosen approximate model (e.g. the Kohn-Sham LDA model), assuming that the latter can be solved exactly;
\item the {\em discretization error}, that is the difference between the value of the observable for the approximate model and the value obtained with the chosen discretization of the approximate model. Indeed, the approximate model is typically an infinite dimensional minimization problem, or a system of partial differential equations, which must be discretized to be solvable by a computer, using e.g. a Gaussian atomic basis set, or a planewave basis;
\item the {\em algorithmic error}, which is the difference between the value of the observable obtained with the exact solution of the discretized approximate model, and the value computed with the chosen algorithm. The discretized approximate models are indeed never solved exactly; they are solved numerically by iterative algorithms (e.g. SCF algorithms, Newton methods), which, in the best case scenario, only converge in the limit of an infinite number of iterations. In practice, stopping criteria are used to exit the iteration loop when the error at iteration $k$, measured in terms of differences between two consecutive iterates or, better, by some norm of some residual, is below a prescribed threshold. If the stopping criterion is very tight, the algorithmic error can become very small, ... or not! For instance, if the discretized approximate model is a non convex optimization problem, there is no guarantee that the numerical algorithm will converge to a global minimum. It may converge to a local, non-global minimum, leading to a non-zero algorithmic error even in the limit of an infinitely tight stopping criterion;
\item the {\em implementation error}, which may, obviously, be due to bugs, but does not vanish in the absence of bugs, because of round-off errors: in molecular simulation packages, most operations are implemented in double precision, and the resulting round-off errors can accumulate, especially for very large systems;
\item the {\em computing error}, due to random hardware failures (miswritten or misread bits). This component of the error is usually negligible in today's standard computations, but is expected to become critical in future exascale architectures~\cite{Li11}.
\end{enumerate}
Quantifying these different sources of errors is an interesting purpose for two reasons. First, guaranteed estimates on these five components of the error would allow one to supplement the computed value of the observable returned by the numerical simulation with guaranteed error bars (certification of the result). Second, they would allow one to choose the parameters of the simulation (approximate model, discretization parameters, algorithm and stopping criteria, data structures, etc.) in an optimal way in order to minimize the computational effort required to reach the target accuracy.

\medskip

The construction of guaranteed error estimators for electronic structure calculation is a very challenging task. Some progress has however been made in the last few years, regarding notably the discretization and algorithmic errors for Kohn-Sham LDA calculations.  {\it A priori} discretization error estimates have been constructed in~\cite{M2AN12} for planewave basis sets, and then in~\cite{Zhou2013} for more general variational discretization methods. {\it A posteriori} error estimators of the discretization error have been proposed in \cite{JCP,Chen14,Lin15}. A combined study of both the discretization and algorithmic errors was published in~\cite{Cances2014} (see also \cite{Dusson2016}). We also refer to~\cite{Maday2003,chen2015,Chen15,Kutzelnigg,Lin2016,Hanrath2008,Kutzelnigg1991,Pernot2015,Pieniazek2008,Rohwedder2013} and references therein for other works on error analysis for electronic structure calculation.

In all the previous works on this topic we are aware of, the purpose was to estimate, {\em for a given nuclear configuration $R$ of the system}, the difference between the ground state energy $E_R$ (or another observable) obtained with the continuous approximate model under consideration (e.g. Kohn-Sham LDA) and its discretized counterpart denoted by $E_{R,N}$, where $N$ is the discretization parameter. The latter is typically the number of basis functions in the basis set for local combination of atomic orbitals (LCAO) methods~\cite{Helgaker}, the inverse fineness of the grid or the mesh for finite difference (FD) and finite element (FE) methods~\cite{Gygi95,Saad10,Pask05,Gavini13}, the cut-off parameter in energy or momentum space for planewave (PW) discretization methods~\cite{Gonze2009,QuantumEspresso,VASP}, or the inverse grid spacing and the coarse and fine region multipliers for wavelet (WL) methods~\cite{BigDFT}.  In variational approximation methods (LCAO, FE, PW, and WL), the discretization error $E_{R,N}-E_R$ is always nonnegative by construction. In systematically improvable methods (FD, FE, PW, and WL), this quantity goes to zero when $N$ goes to infinity with a well-understood rate of convergence depending on the smoothness of the pseudopotential~(see \cite{M2AN12} for the PW case). However, in most applications, the discretization parameters are not tight enough for the discretization error to be lower than the target accuracy, which is typically of the order of 1 kcal/mol or 1 mHa (recall that 1 mHa $\simeq$ 0.6275 kcal/mol $\simeq$ 27.2 meV, which corresponds to an equivalent temperature of about 316 K). It is often advocated that this is not an issue since the real quantity of interest is not the value of the energy $E_R$ for a particular nuclear configuration $R$, but the energy difference $E_{R_1}-E_{R_2}$ between two different configurations $R_1$ and $R_2$. It is indeed expected that 
$$
|(E_{R_1,N}-E_{R_2,N}) - (E_{R_1}-E_{R_2}) | \ll  |E_{R_1,N}-E_{R_1}|+|E_{R_2,N}-E_{R_2}|,
$$
that is, the numerical error on the energy difference between the two configurations is much smaller than the sum of the discretization errors on the energies of each configuration. This expected phenomenon goes by the name of (discretization) error cancellation in the Physics and Chemistry literatures.

\medskip

Obviously, for variational discretization methods, $E_{R_j,N}-E_{R_j} \ge 0$ so that both discretization errors have the same sign, leading to 
\begin{align*}
|(E_{R_1,N}-E_{R_2,N}) - (E_{R_1}-E_{R_2}) | &=  \left|(E_{R_1,N}-E_{R_1})-(E_{R_2,N}-E_{R_2})\right| \\ &\le \max \left( E_{R_1,N}-E_{R_1},E_{R_2,N}-E_{R_2} \right),
\end{align*}
but this does not explain the magnitude of the error cancellation phenomenon. The commonly admitted {\em qualitative} argument usually raised to explain this phenomenon is that the errors $E_{R_1,N}-E_{R_1}$ and $E_{R_2,N}-E_{R_2}$ are of the same nature and almost annihilate one another. 

\medskip

The purpose of this article is to provide a {\em quantitive} analysis of discretization error cancellation for PW discretization methods. First, we report in Section~\ref{sec:numerics} two systematic numerical studies on, respectively, the hydrogen molecule and a simple system consisting of six atoms. For these systems, we are able to perform very accurate calculations with high PW cut-offs, which provide excellent approximations of the ground state energy $E_R$. We then compute, for two different configurations $R_1$ and $R_2$, the error cancellation factor 
$$
0 \le Q_N := \frac{|(E_{R_1,N}-E_{R_2,N}) - (E_{R_1}-E_{R_2}) |}{|E_{R_1,N}-E_{R_1}|+|E_{R_2,N}-E_{R_2}|} \le 1.
$$
We observe that this ratio is indeed small (typically between $10^{-3}$ and $10^{-1}$ depending on the system and on the configurations $R_1$ and $R_2$), and that it does not vary much with $N$. 
In Section \ref{sec:toy_problem}, we introduce a toy model consisting of seeking the ground state of a one-dimensional linear periodic Schr\"odinger equation with Dirac potentials:
$$
 \left( -\frac{d^2}{dx^2} -  \sum_{m \in \Z} z_1 \delta_m - \sum_{m \in \Z} z_2 \delta_{m+R} \right) u_R = E_R u_R, \qquad  \int_0^1 u_R^2(x) dx = 1,
$$
for which we can prove that the error cancellation factor $Q_N$ converges to a fixed number $0 < Q_\infty < 1$ when $N$ goes to infinity. Interestingly, it is possible to obtain a simple explicit expression of $Q_\infty$, which only depends on $z_1$, $z_2$ and on $u_{R_1}(0)^2$, $u_{R_2}(0)^2$, $u_{R_1}(R_1)^2$, $u_{R_1}(R_2)^2$, i.e. on the values of the densities $\rho_{R_1}=u_{R_1}^2$ and $\rho_{R_1}=u_{R_2}^2$ at the singularities of the potential.

\section{Discretization error cancellation in planewave calculations}
\label{sec:numerics}

We present here some numerical simulations on two systems: the $H_2$ molecule and a system consisting of two oxygen atoms and four hydrogen atoms. The simulations are done in a cubic supercell of size 10$\times$10$\times$10 bohrs with the Abinit simulation package~\cite{Gonze2009,Gonze2002}. The chosen approximate model is the periodic Kohn-Sham LDA model~\cite{Kohn1965} with the parametrization and the pseudopotential proposed in~\cite{Goedecker1996}. For each configuration $R$, we compute a reference ground state energy $E_R$ taking a high energy cutoff $E_{\rm cut}=400$ Ha. We then compute approximate energies for $N=E_{\rm cut}$ varying from 5 to 105 Ha by steps of 5 Ha. The so-obtained energies are denoted by~$E_{R,N}$.

For two given configurations $R_1$ and $R_2$ of the same system, we  compute $S_N$, the sum of the discretization errors on the energies of the two configurations (note that $E_{R,N}-E_R \ge 0$ since PW is a variational approximation method), and $D_N$, the discretization error on the energy difference:
\[
	S_N = (E_{R_1,N}-E_{R_1})+(E_{R_2,N}-E_{R_2}) \quad \text{and} \quad D_N = \left|(E_{R_1,N}-E_{R_2,N})-(E_{R_1}-E_{R_2})\right|,
\]
as well as the error cancellation factor
$$
Q_N=\frac{D_N}{S_N} = \frac{\left|(E_{R_1,N}-E_{R_2,N})-(E_{R_1}-E_{R_2})\right|}{(E_{R_1,N}-E_{R_1})+(E_{R_2,N}-E_{R_2})}.
$$

\subsection{Ground state potential energy surface of the H$_2$ molecule}

In all our calculations, the $H_2$ molecule lies on the $x$ axis and is centered at the origin. The parameter $R$ is here the interatomic distance in bohrs. 

We numerically observe that $D_N$ is smaller than $S_N$ by a factor of 10 to 100, and that the error cancellation factor $Q_N$ is smaller when the two interatomic distances are close to each other ($R_1 \simeq R_2$). Morevoer, $Q_N$ is almost constant with respect to the cut-off energy $N$.

In Figure~\ref{fig:H2}, we present detailed results for two different pairs of configurations. On the top, the configurations are rather close since the interatomic distances are $R_1=1.464$ and $R_2=1.524$ bohr. For this approximate model, the equilibrium distance is about $R_{\rm eq} \simeq 1.464$ bohrs (the experimental value is $R_{\rm eq}^{\rm exp}\simeq 1.401$ bohrs). The energy difference is better approximated by a factor of about 50 compared to the energies ($Q_N \simeq 0.02$). Moreover the log-log plots of $S_N$ and $D_N$ are almost parallel, which suggests that there is no improvement in the order of convergence when considering energy differences instead of energies; only the prefactor is improved. This is confirmed by the plots of the error cancellation factor $Q_N$, showing that this ratio does not vary much with $N$. 
On the bottom, the configurations are further apart. The interatomic distances are $R_1=1.344$ and $R_2=1.704$ bohrs. We observe a similar behavior except that the error cancellation phenomenon is less pronounced ($Q_N \simeq 0.1$). 

\begin{figure}[ht]
\centering
\begin{tabular}{cc}
 \hspace{-1cm} \includegraphics[width=8cm]{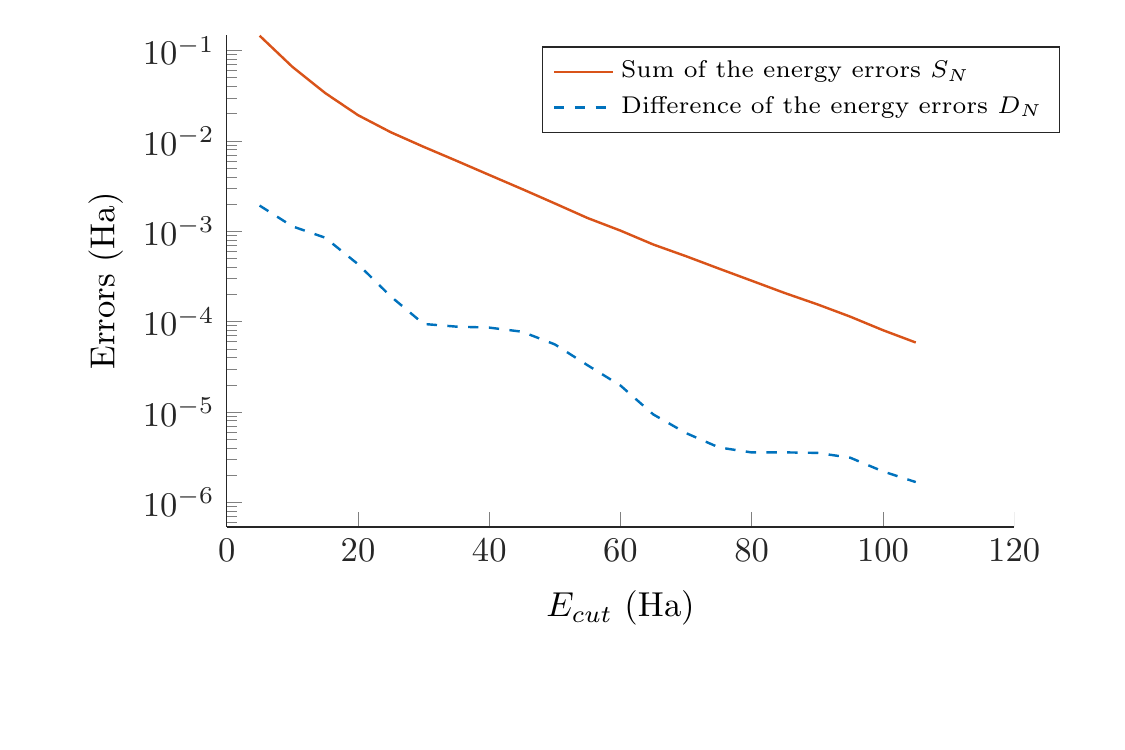}
& \hspace{-1cm} \includegraphics[width=8cm]{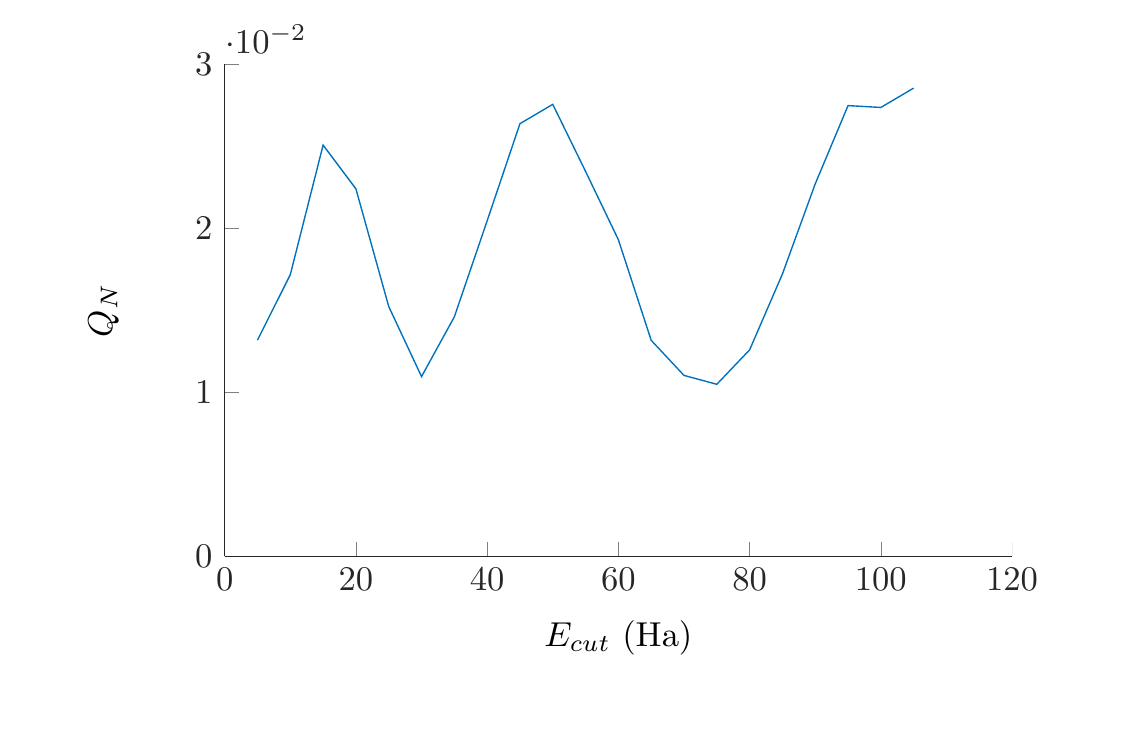} \\ 
 \hspace{-1cm} \includegraphics[width=8cm]{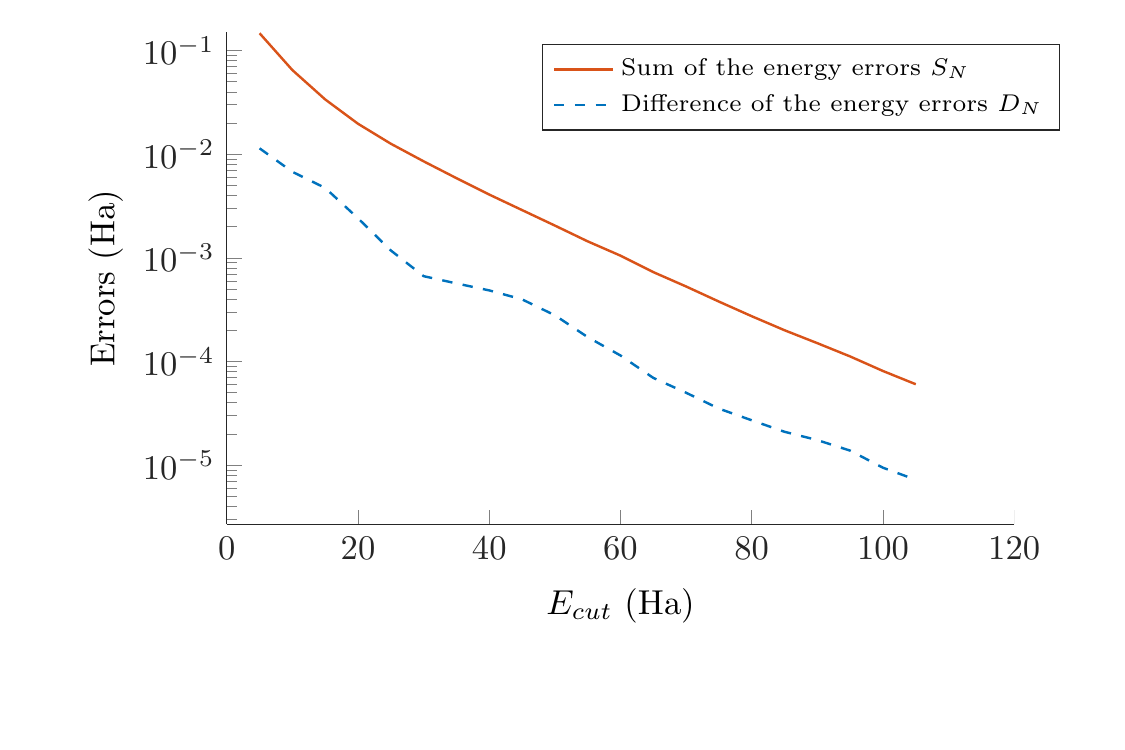} 
 & \hspace{-1cm} \includegraphics[width=8cm]{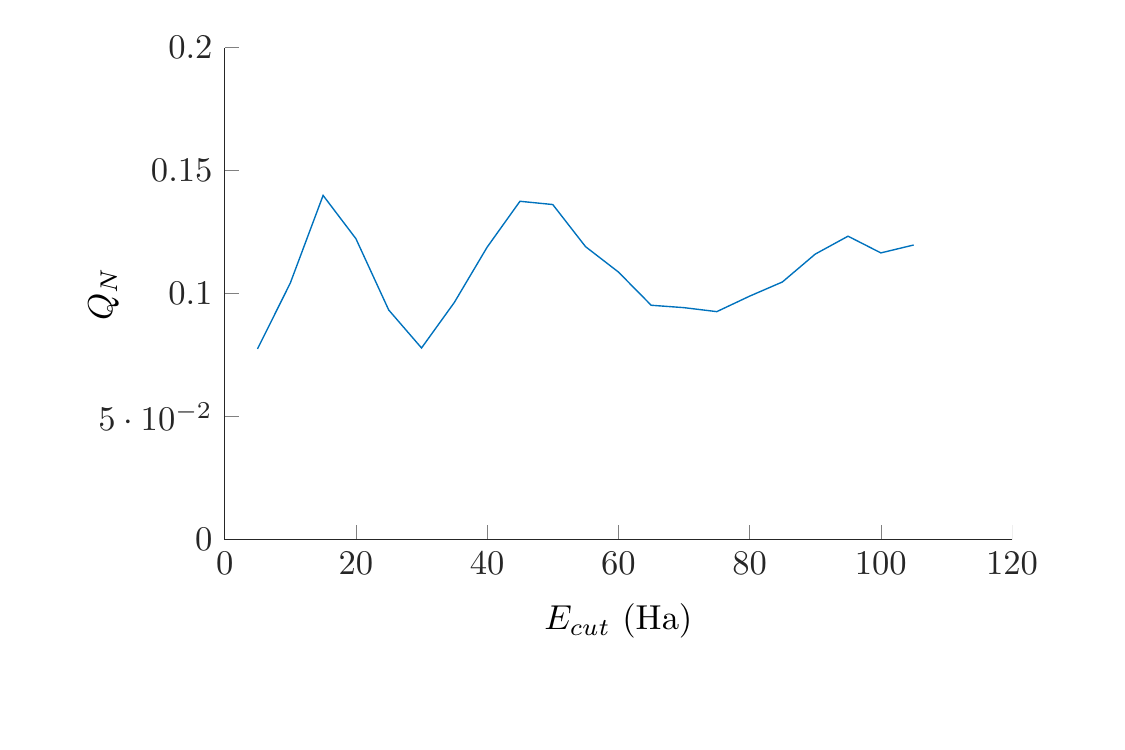}
\end{tabular}
\vspace{-1cm}
\caption{Convergence plots of the quantities $S_N$ and $D_N$ (left) and of the error cancellation factor $Q_N=D_N/S_N$ (right) for two different pairs of interatomic distances for the $H_2$ molecule. Top: $R_1=1.464$ and $R_2=1.524$ bohrs. Bottom: $R_1=1.344$ and $R_2=1.704$ bohrs.}
\label{fig:H2}
\end{figure}

We then compare in Table \ref{tab:tab1} the values of $S_N$ and $D_N$ for different pairs of configurations and for two values of $N=E_{\rm cut}$: a rather coarse energy cut-off $N=30$ Ha, and a quite fine one $N=100$ Ha. One configuration is kept fixed ($R_1=1.284$ bohrs), while the second one varies from $R_2=1.344$ bohrs (close configurations) to $R_2=1.764$ bohrs (distant configurations). 
 We also report, for each pair of configurations, the minimum, maximum, and  mean values of $Q_N$ over the different tested energy cutoffs $5 \le N \le 105$ Ha. We also observe that $Q_N$ increases with $R_2-R_1$ on the range $R_2=[1.344,1.764]$.

\begin{table}[ht]
\begin{tabular}{|c|c||c|c||c|c||c|c|c|}
\hline
$R_1$ & $R_2$ & $S_{N=30}$ &
$D_{N=30}$ & $S_{N=100}$ & $D_{M=100}$  &
$\min(Q_N)$ & $\max(Q_N)$ & $\text{mean}(Q_N)$ \\
\hline
1.284 & 1.344  & 9.410 & 0.1985 & 0.09157 & 0.00112  &
0.0103 & 0.0340  & 0.0212 \\
\hline
1.284  & 1.404  & 9.268 & 0.3408 & 0.08990 & 0.00279  &
0.0216  & 0.0633  & 0.0413  \\
\hline
1.284  & 1.464  & 9.160 & 0.4491 & 0.08772  & 0.00497  &
0.0375  & 0.0895  & 0.0610  \\
\hline
1.284  & 1.524  & 9.065 & 0.5436 & 0.08552  & 0.00717  &
0.0544  & 0.1107  & 0.0802  \\
\hline
1.284  & 1.584  & 8.969 & 0.6394 & 0.08380  & 0.00889  &
0.0713  & 0.1285  & 0.0985  \\
\hline
1.284  & 1.644  & 8.863 & 0.7456 & 0.08274  & 0.00995  &
0.0841  & 0.1455  & 0.1151  \\
\hline
1.284  & 1.704  & 8.744 & 0.8646 & 0.08213  & 0.01056  &
0.0983  & 0.1642  & 0.1302  \\
\hline
1.284  & 1.764  & 8.615 & 0.9937 & 0.08154  & 0.01115  &
0.1072 & 0.1802  & 0.1440 \\
\hline
\end{tabular}
\caption{Comparison of $S_N$, $D_N$ and $Q_N$ for different atomic configurations of the H$_2$ molecule. Distances are in bohrs, energies in mHa.}
\label{tab:tab1}
\end{table}

\subsection{Energy of a simple chemical reaction}

In this section, we consider the energy difference between two very different configurations of a system consisting of two oxygen atoms and four hydrogen atoms. The first configuration, denoted by $R_1$, corresponds to the chemical system 2 H$_2$O (two water molecules) and the second one, denoted by $R_2$, to the chemical system 2 H$_2$ + O$_2$, all these molecules being in their equilibrium geometry (see Figure~\ref{fig:H2O}). The energy difference between the two configurations thus provides a rough estimate of the energy of the chemical reaction 
$$
\mbox{2 H$_2$ + O$_2$} \;  \longrightarrow \; \mbox{2 H$_2$O}.
$$
\begin{figure}[ht]
\centering
\begin{tabular}{cc}
\includegraphics[width=8truecm]{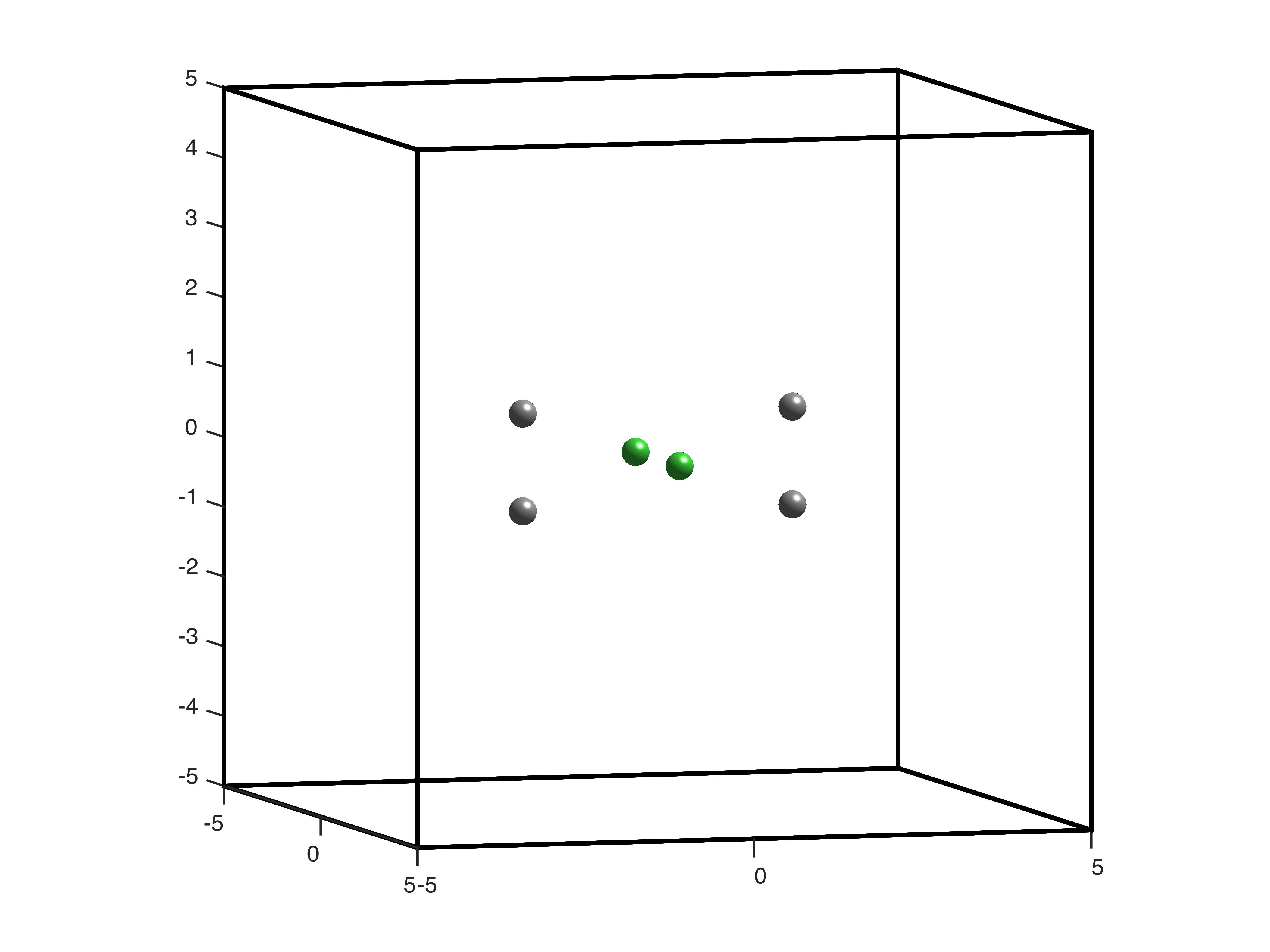} &  \includegraphics[width=8truecm]{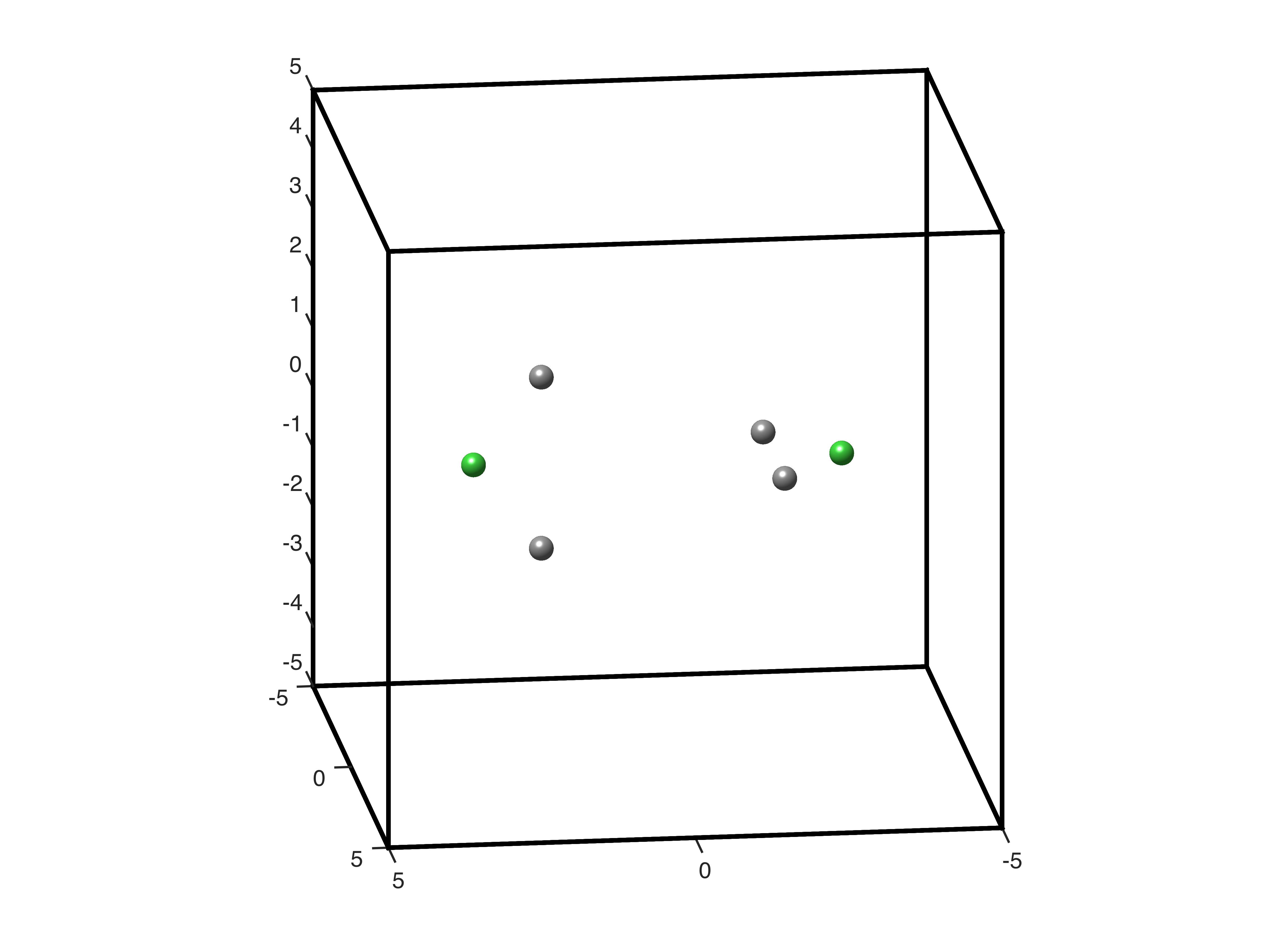}
\end{tabular}
\caption{Graphical representation of the two atomic configurations whose energies are compared. Oxygen atoms are in green, hydrogen atoms in black.}
\label{fig:H2O}
\end{figure}

We can observe on Figure~\ref{fig:H20QN} and Table~\ref{tab:tab11} a similar behavior as for H$_2$, but with a better error cancellation factor ($Q_N \simeq 0.005$).

\begin{figure}[ht]
\centering
\begin{tabular}{cc}
 \hspace{-1cm} \includegraphics[width=8cm]{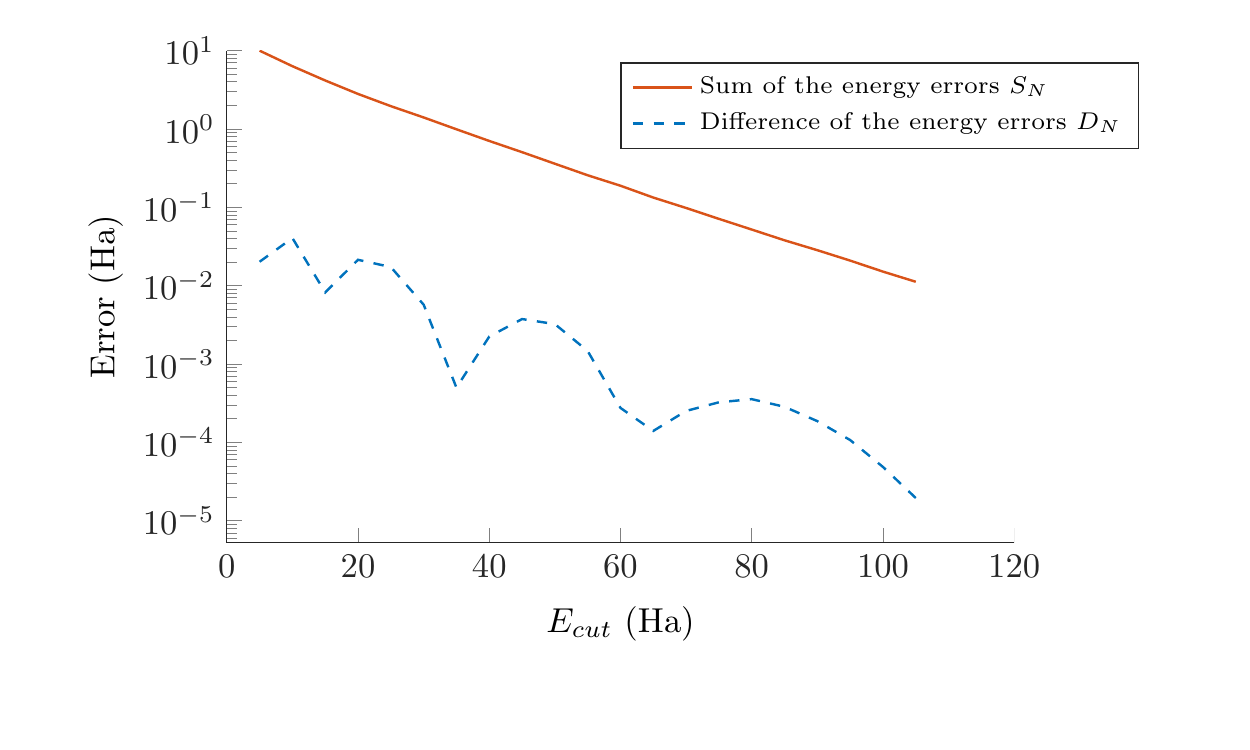}
& \hspace{-1cm} \includegraphics[width=8cm]{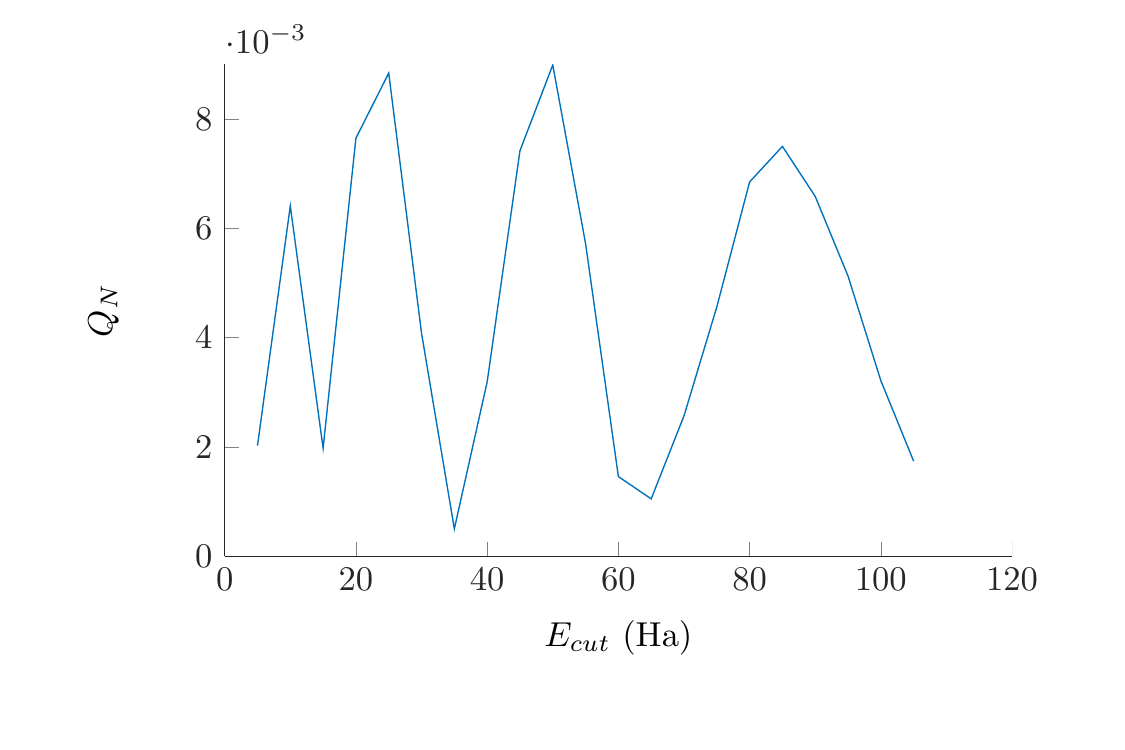}
\end{tabular}
\vspace{-1cm}
\caption{Convergence plots of the quantities $S_N$ and $D_N$ (left) and of the error cancellation factor $Q_N=D_N/S_N$ (right) for the two different configurations displayed on Figure~\ref{fig:H2O}.}
\label{fig:H20QN}
\end{figure}

\begin{table}[ht]
\centering
\begin{tabular}{|c|c||c|c||c|c||c|c|c|}
\hline
$S_{N=30}$ & $D_{N=30}$ & $S_{N=100}$ & $D_{N=100}$  & $\min(Q_N)$ & $\max(Q_N)$ & $\text{mean}(Q_N)$ \\
1403 & 5.726 & 15.12 & 0.0485 & 0.0005036 & 0.008986 & 0.004640 \\
\hline
\end{tabular}
\caption{Comparison of $S_N$, $D_N$ (in mHa) and $Q_N$ for the two different configurations displayed on Figure~\ref{fig:H2O}.}
\label{tab:tab11}
\end{table}

\section{Mathematical analysis of a toy model}
\label{sec:toy_problem}

We now present a simple one-dimensional periodic linear Schr\"odinger model for which the discretization error cancellation phenomenon observed in the previous section can be explained with full mathematical rigor.

We denote by 
\[
	L^2_{\rm per} := \left\{ u\in L^2_{\rm loc}(\R) \; \middle|	 \; u \; \text{is} \;	1-\text{periodic}		 \right\}
\]
the vector space of the 1-periodic locally square integrable real-valued functions on $\R$, and by 
\[
	H^1_{\rm per}:= \left\{	u\in L^2_{\rm per}  \; \middle|	 \; u'\in  L^2_{\rm per} 	\right\}
\]
the associated order-1 Sobolev space. For two given parameters $z_1,z_2>0$, we consider the family of problems, indexed by $R \in (0,1)$, consisting in finding the ground state $(u_R,E_R) \in H^1_{\rm per} \times \mathbb{R}$ of
\begin{equation}
\label{eq:pb_cont}
\left \{
\begin{array}{l}
 \displaystyle  \left( -\frac{d^2}{dx^2} -  \sum_{m \in \Z} z_1 \delta_m - \sum_{m \in \Z} z_2 \delta_{m+R} \right) u_R = E_R u_R, \\
 \displaystyle   \int_0^1 u_R^2(x) dx = 1,  \quad u_R \ge 0, \\
\end{array}
\right.
\end{equation}
where $\delta_a$ denotes the Dirac mass at point $a \in \R$. A variational formulation of the problem is: find the ground state $(u_R,E_R) \in H^1_{\rm per} \times \mathbb{R}$ of 
\begin{equation}
\label{eq:pb_vare}
\left \{
\begin{array}{l}
\displaystyle  	\forall  v\in H^1_{\rm per}, \; \int_0^1 u'_R(x) v'(x) dx - z_1 u_R(0) v(0) - z_2 u_R(R) v(R) = E_R \int_0^1 u_R(x) v(x) dx,  \\
\displaystyle    \int_0^1 u_R^2(x) dx = 1, \quad u_R \ge 0. \\
\end{array}
\right.
\end{equation}
\begin{remark}
The ground state eigenvalue $E_R$ is negative. Indeed, using the variational characterization of the ground state energy, we get
\begin{equation*}
\displaystyle E_R=\min_{\displaystyle v\in H^1_{\rm per} \setminus \left\{0\right\}} \frac{\displaystyle \int_0^1 v'(x)^2 dx - z_1 v(0)^2 - z_2 v(R)^2}{\displaystyle \int_0^1 v^2(x) dx} < 0,
\end{equation*}
since the Rayleigh quotient is equal to $-z_1-z_2 < 0$ for the constant test function $v = 1$.
\end{remark}
Denoting by $k_R=\sqrt{-E_R}$, we have
\begin{equation}
\label{eq:uplus}
\left\{
\begin{array}{ll}
	u_R(x) = A e^{k_R x}+B e^{-k_R x}, & \quad \forall x \in [0,R], \\
	u_R(x) = C e^{k_R x}+D e^{-k_R x}, & \quad \forall x \in [R-1,0), 
\end{array}	
\right.
\end{equation}
where $A$, $B$, $C$, and $D$ are real-valued constants. Since the function $u_R$ is 1-periodic and continuous on $\R$ and its derivative satisfies the jump conditions $u_R'(m+0)-u_R'(m-0) = - z_1 u_R(m)$ and $u_R'(m+R+0)-u_R'(m+R-0) = -z_2 u_R(m+R)$ for all $m \in \Z$, the coefficients $A$, $B$, $C$, $D$ solve the linear system
$$
\underbrace{\left( \begin{array}{cccc}  1 & 1 & -1 & -1 \\ e^{k_R R} & e^{-k_R R} &-  e^{k_R (R-1)} & - e^{-k_R (R-1)} \\
k_R+z_1 & -k_R+z_1 & -k_R & k_R \\ (k_R-z_2) e^{k_R R} & - (k_R+z_2) e^{-k_R R} &-  k_Re^{k_R (R-1)} & k_R e^{-k_R (R-1)}
\end{array} \right)}_{M(k_R)} \left( \begin{array}{c} A \\ B \\ C \\ D  \end{array} \right)  = \left( \begin{array}{c} 0 \\ 0 \\ 0 \\ 0 \end{array} \right).
$$
The wave vector $k_R$ is the lowest positive root of the function $k \mapsto {\rm det}(M(k))$. The coefficients $(A,B,C,D)$ are then uniquely determined by the normalization condition $\|u_R\|_{L^2_{\rm per}}=1$ and the positivity of $u_R$. Exact solutions for two different values of the triplet of parameters $(z_1,z_2,R)$ are plotted in Figure \ref{fig:fig1}.
\begin{figure}[!ht]
\centering
\begin{minipage}{0.47\linewidth}
\hspace*{-1cm}
\includegraphics[scale=0.85]{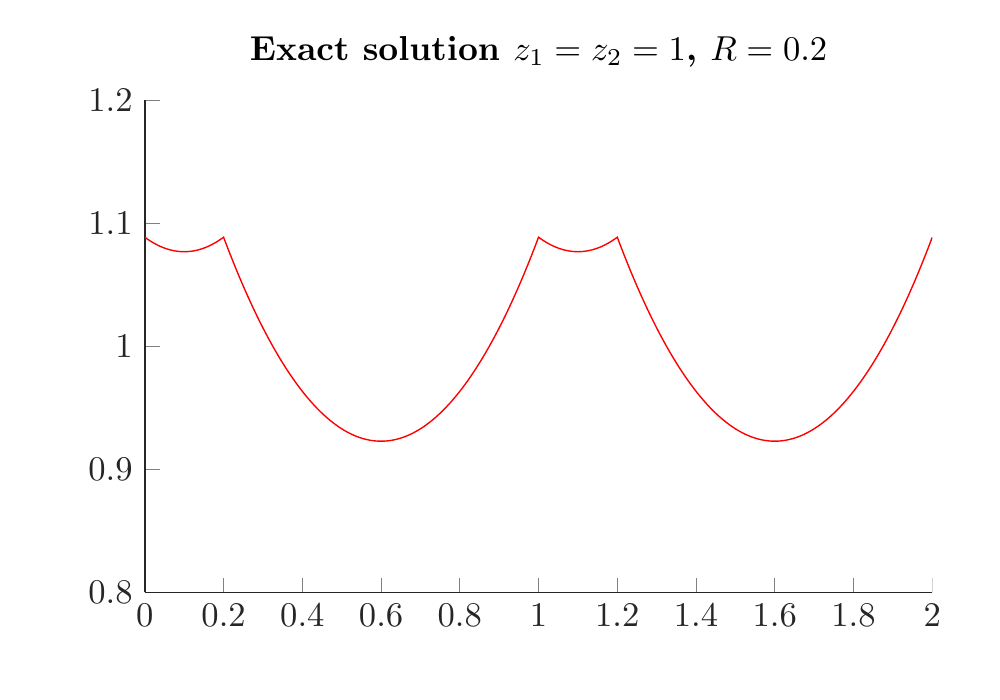}
\end{minipage}
\begin{minipage}{0.47\linewidth}
\hspace*{-1cm}
\includegraphics[scale=0.85]{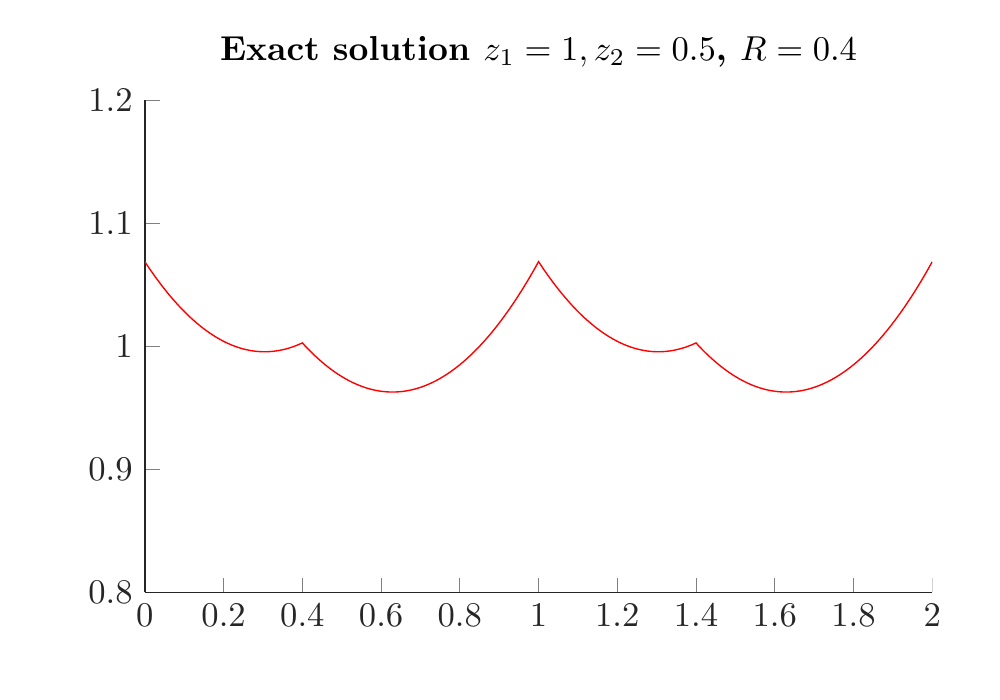}
\end{minipage}
\caption{Plot of the exact solutions of \eqref{eq:pb_cont} for two sets of parameters.}
\label{fig:fig1}
\end{figure}

An approximate solution of the problem is obtained using the PW discretization method. Denoting by 
\[
	X_N := \text{Span} \left\{  v_N(x) = \sum_{k \in \Z, \; |k| \le N} \widehat v_k e^{2\pi i k x} \; \bigg| \; \widehat v_k \in \C, \; \widehat v_{-k}  = \overline{\widehat v_k} \right\}  \subset H^1_{\rm per},
\]
the variational approximation of problem \eqref{eq:pb_vare} in $X_N$ consists in computing the ground state $(u_{R,N},E_{R,N}) \in X_N \times \R$ of 
\begin{equation}
\label{eq:pb_varN}
\left \{
\begin{array}{l}
 \displaystyle \forall v_N\in X_N,\quad \int_0^1 u_{R,N}' v_N' - z_1 u_{R,N}(0) v_N(0) - z_2  u_{R,N}(R) v_N(R) \
	= E_{R,N} \int_0^1 u_{R,N} v_N,  \\
 \displaystyle   \int_0^1 u_{R,N}^2=1, \quad \int_0^1 u_{R,N} \ge 0. \\
\end{array}
\right.	
\end{equation}
The conditions $\widehat v_{-k}  = \overline{\widehat v_k}$ in the definition of $X_N$ is equivalent to imposing that the elements of $X_N$ are real-valued functions. For convenience, the discretization parameter $N$ here corresponds to the cut-off in momentum space. As above, we consider the  error cancellation factor 
\begin{equation}
\label{eq:imp}
\displaystyle	Q_N = \frac{|(E_{R_1,N}-E_{R_2,N})-(E_{R_1}-E_{R_2})|}{(E_{R_1,N}-E_{R_1})+(E_{R_2,N}-E_{R_2})}
\end{equation}
associated with the pair of configurations $(R_1,R_2)$.

Note that imposing the condition $\int_0^1 u_{R,N} \ge 0$, we ensure that the discrete eigenfunction $u_{R,N}$ will approximate the positive eigenfunction $u_R$ to the continuous problem \eqref{eq:pb_cont} and not $-u_R$.
\begin{theorem}[Asymptotic expressions of the energy error and of the error cancellation factor]
\label{th:th1}
For all $z_1,z_2>0$ and $R \in (0,1)$, we have for all $\epsilon >0$,
\begin{equation} \label{eq:Ediff}
E_{R,N}-E_R = \frac{\alpha_R}N - \frac{\alpha_R}{2N^2}  +\frac{\beta^{(1)}_{R,N}}{N} + \frac{\gamma_R}{N} \eta_{R,N}+ o\left( \frac{1}{N^{3-\epsilon}}\right),
\end{equation}
where
$$
\alpha_R:= \frac{z_1^2 u_R(0)^2 + z_2^2 u_R(R)^2}{2\pi^2}, \quad \gamma_R:= \frac{z_1z_2 u_{R}(0)u_R(R)}{\pi^2}, \quad
\eta_{R,N}:= N \sum_{k=N+1}^{+\infty} \frac{\cos(2\pi k R)}{k^2}, 
$$
$$
\beta^{(1)}_{R,N} := \frac{z_1^2 u_R(0) (u_{R,N}(0)-u_R(0)) + z_2^2 u_R(R) (u_{R,N}(R)-u_R(R))}{2\pi^2}.
$$
In addition 
$$
|\eta_{R,N}| \le \min \left( 1, \frac{2+\frac{\pi^3}8}{|\sin(\pi R)| N} \right),
$$
and for all $\epsilon > 0$, there exists $C_\epsilon \in \R_+$ such that
$$
|\beta^{(1)}_{R,N}| \le \frac{C_\epsilon}{N^{1-\epsilon}}.
$$
As a consequence, we have for all $z_1,z_2>0$ and all $R_1,R_2\in (0,1)$,
\begin{equation} \label{eq:CVQN}
	\lim\limits_{N \to +\infty} Q_N =  \frac{|\alpha_{R_1}-\alpha_{R_2}|}{\alpha_{R_1}+\alpha_{R_2}}=\frac{\left| z_1^2 \left( u_{R_1}(0)^2 - u_{R_2}(0)^2 \right) + z_2^2 \left( u_{R_1}(R_1)^2 -  u_{R_2}(R_2)^2 \right)\right|}{z_1^2 (u_{R_1}(0)^2 + u_{R_2}(0)^2 ) + z_2^2  (u_{R_1}(R_1)^2 + u_{R_2}(R_2)^2)} .
\end{equation}
\end{theorem}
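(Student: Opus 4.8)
\noindent\textit{Proof strategy.} The plan is to derive an \emph{exact} formula for $E_{R,N}-E_R$ in terms of the tail of the Fourier series of $u_R$, to expand that tail asymptotically, and then to read off $\lim_N Q_N$. Write $e_N:=u_R-u_{R,N}\in H^1_{\rm per}$, let $\Pi_N$ denote the truncation of the Fourier series (equivalently, the $L^2_{\rm per}$-orthogonal projection onto $X_N$), and split $e_N=r_N+\widetilde e_N$ with $r_N:=u_R-\Pi_N u_R$ and $\widetilde e_N:=\Pi_N u_R-u_{R,N}\in X_N$; note that $r_N$ is orthogonal to $X_N$ both in $L^2_{\rm per}$ and in $H^1_{\rm per}$. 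Writing $a(u,v)=\int_0^1 u'v'-z_1u(0)v(0)-z_2u(R)v(R)$ and $\langle u,v\rangle=\int_0^1 uv$, the normalizations $\|u_R\|_{L^2_{\rm per}}=\|u_{R,N}\|_{L^2_{\rm per}}=1$ and the Galerkin orthogonality $a(u_R,v_N)=E_R\langle u_R,v_N\rangle$ for all $v_N\in X_N$ (a consequence of \eqref{eq:pb_vare}) give, after a short computation, $E_{R,N}-E_R=a(e_N,e_N)-E_R\langle e_N,e_N\rangle=-\big(a(e_N,u_{R,N})-E_R\langle e_N,u_{R,N}\rangle\big)$. Since $\langle r_N',v_N'\rangle=\langle r_N,v_N\rangle=0$ for $v_N\in X_N$, one has $a(r_N,u_{R,N})=-z_1r_N(0)u_{R,N}(0)-z_2r_N(R)u_{R,N}(R)$; combining this with the discrete equation $a(\widetilde e_N,u_{R,N})=E_{R,N}\langle\widetilde e_N,u_{R,N}\rangle$ and the relation $\langle\widetilde e_N,u_{R,N}\rangle=\langle u_R,u_{R,N}\rangle-1=-\tfrac12\|e_N\|_{L^2_{\rm per}}^2$ yields the key identity
\[
\Big(1-\tfrac12\|e_N\|_{L^2_{\rm per}}^2\Big)\big(E_{R,N}-E_R\big)=z_1\,r_N(0)\,u_{R,N}(0)+z_2\,r_N(R)\,u_{R,N}(R).
\]

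Next I would compute the Fourier tail of $u_R$ explicitly. Rewriting \eqref{eq:pb_cont} as $-u_R''=E_Ru_R+z_1u_R(0)\sum_{m\in\Z}\delta_m+z_2u_R(R)\sum_{m\in\Z}\delta_{m+R}$ in $\mathcal D'(\R)$ and taking Fourier coefficients (with $k_R^2=-E_R$) gives $\widehat{u_R}(k)=\big(z_1u_R(0)+z_2u_R(R)e^{-2\pi ikR}\big)\big/\big((2\pi k)^2+k_R^2\big)$, hence
\[
r_N(0)=2\sum_{k>N}\frac{z_1u_R(0)+z_2u_R(R)\cos(2\pi kR)}{(2\pi k)^2+k_R^2},\qquad
r_N(R)=2\sum_{k>N}\frac{z_1u_R(0)\cos(2\pi kR)+z_2u_R(R)}{(2\pi k)^2+k_R^2}.
\]
Using $\big((2\pi k)^2+k_R^2\big)^{-1}=(2\pi k)^{-2}+O(k^{-4})$, the elementary expansion $\sum_{k>N}k^{-2}=\tfrac1N-\tfrac1{2N^2}+O(N^{-3})$, and $\sum_{k>N}k^{-2}\cos(2\pi kR)=\eta_{R,N}/N$, I obtain $z_1u_R(0)r_N(0)+z_2u_R(R)r_N(R)=\tfrac{\alpha_R}{N}-\tfrac{\alpha_R}{2N^2}+\tfrac{\gamma_R\eta_{R,N}}{N}+O(N^{-3})$ together with $r_N(0)=\tfrac{z_1u_R(0)}{2\pi^2N}+O(N^{-2})$ and $r_N(R)=\tfrac{z_2u_R(R)}{2\pi^2N}+O(N^{-2})$. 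Since $u_{R,N}(0)-u_R(0)=-e_N(0)$ and $u_{R,N}(R)-u_R(R)=-e_N(R)$, inserting these into the key identity gives
\[
\Big(1-\tfrac12\|e_N\|^2\Big)\big(E_{R,N}-E_R\big)=\frac{\alpha_R}{N}-\frac{\alpha_R}{2N^2}+\frac{\gamma_R\eta_{R,N}}{N}+\frac{\beta^{(1)}_{R,N}}{N}+O\!\big(N^{-3}\big)+O\!\big(N^{-2}(|e_N(0)|+|e_N(R)|)\big).
\]

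It then remains to bound $\|e_N\|_{L^2_{\rm per}}$, $e_N(0)$ and $e_N(R)$, and this is the step I expect to be the main obstacle, because the shifted form $(a-E_R)(\cdot,\cdot)=\langle\cdot',\cdot'\rangle+k_R^2\langle\cdot,\cdot\rangle-z_1(\cdot)(0)(\cdot)(0)-z_2(\cdot)(R)(\cdot)(R)$ is only positive \emph{semidefinite} on $H^1_{\rm per}$ and its restriction to $X_N$ has smallest eigenvalue $E_{R,N}-E_R\to0$, so it is not uniformly coercive on $X_N$. I would circumvent this by splitting $\widetilde e_N=c_Nu_{R,N}+\widetilde e_N^{\,\perp}$ with $\langle\widetilde e_N^{\,\perp},u_{R,N}\rangle=0$: the simplicity of the ground-state eigenvalue of \eqref{eq:pb_cont} yields a spectral gap $E_{R,N}^{(2)}-E_R\ge\tfrac12\big(E_R^{(2)}-E_R\big)>0$ for $N$ large, which (together with a one-dimensional interpolation inequality) makes $(a-E_R)$ uniformly $H^1_{\rm per}$-coercive on $\{v_N\in X_N:\langle v_N,u_{R,N}\rangle=0\}$. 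Testing the equation $(a-E_R)(\widetilde e_N,v_N)=-(E_{R,N}-E_R)\langle u_{R,N},v_N\rangle+z_1r_N(0)v_N(0)+z_2r_N(R)v_N(R)$ (valid for $v_N\in X_N$) against $v_N=\widetilde e_N^{\,\perp}$ — the first term on the right vanishes and $(a-E_R)(u_{R,N},\widetilde e_N^{\,\perp})=0$ — and using the uniform embedding $H^1_{\rm per}\hookrightarrow L^\infty$ gives $\|\widetilde e_N^{\,\perp}\|_{H^1_{\rm per}}\lesssim|r_N(0)|+|r_N(R)|=O(N^{-1})$; bootstrapping $c_N=-\tfrac12\|e_N\|^2$ then yields $\|e_N\|_{L^2_{\rm per}}^2=O(N^{-2})$ and $|e_N(0)|+|e_N(R)|=O(N^{-1})$, in particular $|\beta^{(1)}_{R,N}|\le C_\epsilon N^{-(1-\epsilon)}$. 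Dividing the last display by $1-\tfrac12\|e_N\|^2=1+O(N^{-2})$ (which perturbs the leading term $\alpha_R/N$ only by $O(N^{-3})$) then produces \eqref{eq:Ediff}. The two bounds on $\eta_{R,N}$ are elementary: $|\eta_{R,N}|\le N\sum_{k>N}k^{-2}\le1$, while summation by parts using $\big|\sum_{k=N+1}^M\cos(2\pi kR)\big|\le|\sin\pi R|^{-1}$ gives the $O\big((|\sin\pi R|N)^{-1}\big)$ bound.

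Finally, since $|\eta_{R,N}|\le(2+\pi^3/8)/(|\sin\pi R|N)$ and $|\beta^{(1)}_{R,N}|\le C_\epsilon N^{-(1-\epsilon)}$, \eqref{eq:Ediff} gives $E_{R_j,N}-E_{R_j}=\alpha_{R_j}/N+O(N^{-(2-\epsilon)})$ for $j=1,2$; moreover $\alpha_{R_1}+\alpha_{R_2}>0$ because $u_R>0$ on $\R$ (strong maximum principle for the ground state), so $u_{R_j}(0),u_{R_j}(R_j)>0$. Writing the numerator of $Q_N$ as $|(E_{R_1,N}-E_{R_1})-(E_{R_2,N}-E_{R_2})|$, both numerator and denominator equal $N^{-1}$ times a quantity converging to $|\alpha_{R_1}-\alpha_{R_2}|$ and to $\alpha_{R_1}+\alpha_{R_2}$ respectively, whence $Q_N\to|\alpha_{R_1}-\alpha_{R_2}|/(\alpha_{R_1}+\alpha_{R_2})$; substituting the value of $\alpha_R$ gives \eqref{eq:CVQN}. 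The crux of the whole argument is the a priori control of the trace values $e_N(0),e_N(R)$, which is delicate precisely because $a-E_R$ fails to be uniformly coercive on $X_N$ and because $u_R$ has low regularity ($\widehat{u_R}(k)$ decays only like $k^{-2}$).
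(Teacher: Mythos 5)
Your central mechanism is the same as the paper's: your ``key identity'' $\bigl(1-\tfrac12\|e_N\|_{L^2_{\rm per}}^2\bigr)(E_{R,N}-E_R)=z_1 r_N(0)u_{R,N}(0)+z_2 r_N(R)u_{R,N}(R)$ is exactly the paper's Lemma~\ref{lem:lem1} (derived there by testing the continuous problem with $u_{R,N}$ and the discrete one with $\Pi_N u_R$), and your explicit formula for $\widehat{u_R}(k)$ and the expansion of the tail sums reproduce Lemmata~\ref{lem:lem2} and~\ref{lem:lem3}. (Your sign $\sum_{k>N}k^{-2}=\tfrac1N-\tfrac1{2N^2}+O(N^{-3})$ is the correct one and matches the $-\alpha_R/(2N^2)$ term in \eqref{eq:Ediff}; the ``$+\tfrac1{2N}$'' in the statement of Lemma~\ref{lem:lem3} is a sign typo.) Where you genuinely diverge is in the two auxiliary ingredients. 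For the a priori control of $e_N$ (the paper's Lemma~\ref{lem:lem0}), the paper goes through the min--max principle, $H^{3/2-\epsilon}_{\rm per}$ regularity of $u_R$, an elliptic bootstrap for $u_{R,N}$, interpolation, Aubin--Nitsche, and Sobolev embedding, obtaining $|e_N(0)|+|e_N(R)|=O(N^{-1+\epsilon})$ and $\|e_N\|_{L^2}^2=O(N^{-3+\epsilon})$; you instead exploit coercivity of $a-E_R$ on $\{v_N\in X_N:\langle v_N,u_{R,N}\rangle=0\}$ (via the discrete spectral gap, which by min--max is even $\ge E_R^{(2)}-E_R$ for every $N$) together with the explicit $O(N^{-1})$ size of $r_N(0),r_N(R)$. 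This is more self-contained, gives the slightly sharper $|e_N(0)|+|e_N(R)|=O(N^{-1})$, and your weaker $\|e_N\|_{L^2}^2=O(N^{-2})$ still suffices because it only enters multiplied by $E_{R,N}-E_R=O(N^{-1})$. Likewise your Abel-summation bound $|\eta_{R,N}|\le(|\sin\pi R|\,N)^{-1}\cdot N/(N+1)$ is simpler than the paper's Poisson-summation/Taylor argument and gives a better constant than $2+\pi^3/8$.

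The one genuine gap is in your bootstrap for $c_N$. From $\|e_N\|_{L^2}\le\|r_N\|_{L^2}+\|\widetilde e_N^{\,\perp}\|_{L^2}+\tfrac12\|e_N\|_{L^2}^2$ you get an inequality of the form $x\le\delta_N+\tfrac12x^2$ with $\delta_N=O(N^{-1})$, and this alone does \emph{not} yield $x=O(N^{-1})$: it admits the spurious branch $x\ge 1+\sqrt{1-2\delta_N}$, i.e.\ $u_{R,N}$ close to $-u_R$. You must first establish the qualitative fact $\|e_N\|_{L^2}\to0$ (equivalently $\langle u_{R,N},u_R\rangle_{L^2_{\rm per}}\to+1$), and this is precisely where the normalization $\int_0^1 u_{R,N}\ge0$ in \eqref{eq:pb_varN} must be invoked: e.g.\ the min--max bound with $v_N=\Pi_N u_R/\|\Pi_N u_R\|_{L^2}$ gives $E_{R,N}\to E_R$, the gap $E_R^{(2)}-E_R>0$ then forces $|\langle u_{R,N},u_R\rangle|\to1$, and since $\int_0^1 u_R>0$ the sign condition rules out the limit $-1$; only then does your quadratic inequality select the small branch and deliver $\|e_N\|_{L^2}=O(N^{-1})$, $|c_N|=O(N^{-2})$. (This preliminary step is essentially what the paper's Lemma~\ref{lem:lem0} does.) Two further points you use tacitly but should state: the uniform bound on $\|u_{R,N}\|_{H^1_{\rm per}}$ (hence on $u_{R,N}(0),u_{R,N}(R)$), which follows from $a_R(u_{R,N},u_{R,N})=E_{R,N}\le E_{R,N_0}$ plus the G\aa rding inequality, and the strict positivity $u_R(0),u_R(R)>0$ guaranteeing $\alpha_{R_1}+\alpha_{R_2}>0$ in the limit \eqref{eq:CVQN} — which, to your credit, you do mention while the paper leaves it implicit. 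With these insertions your argument is complete and correct.
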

The proof of the above theorem is given in Appendix. We deduce from \eqref{eq:Ediff} that the discretization error $E_{R,N}-E_R$ on the energy of the configuration $R$ is the sum of 
\begin{enumerate}
\item a leading term $\alpha_R N^{-1}$ of order 1 (in $N^{-1}$);
\item three terms $- 1/2 \alpha_RN^{-2}$, $\beta^{(1)}_{R,N}N^{-1}$, and  $\gamma_{R}N^{-1} \eta_{R,N}$ which are roughly of order 2;
\item higher order terms which are roughly of order $3$ and above.
\end{enumerate}
The leading term $\alpha_R N^{-1}$ has a very simple expression and the prefactor $\alpha_R$ does not vary much with respect to~$R$ (see Figure~\ref{fig:alphaR}). This explains the phenomenon of discretization error cancellation. Regarding the second order corrections on $E_{R,N}-E_R$, we have observed numerically (see Figure~\ref{fig:resnum}) that
\begin{itemize}
\item  the terms $-\frac 12 \alpha_RN^{-2}$ and $\gamma_{R}N^{-1} \eta_{R,N}$ are of about the same order of magnitude in absolute values, that the former is always negative (since $\alpha_R >0$), but that the latter can be either positive or negative, so that the sum of these two contributions can be either significant or negligible;
\item the term $\beta^{(1)}_{R,N}N^{-1}$ is smaller in absolute value than the other two terms, and seems to be always negative. Our numerical calculations indeed show that $u_{R,N}(0) < u_R(0)$ and $u_{R,N}(R) < u_R(R)$, which is not very surprising since the function $u_R$ has cusps at points $x=0$ and $x=R$ (see Figure~\ref{fig:fig1}). These inequalities have not been rigorously established though.
\end{itemize}

\begin{figure}[ht]
\centering
\includegraphics[scale=1]{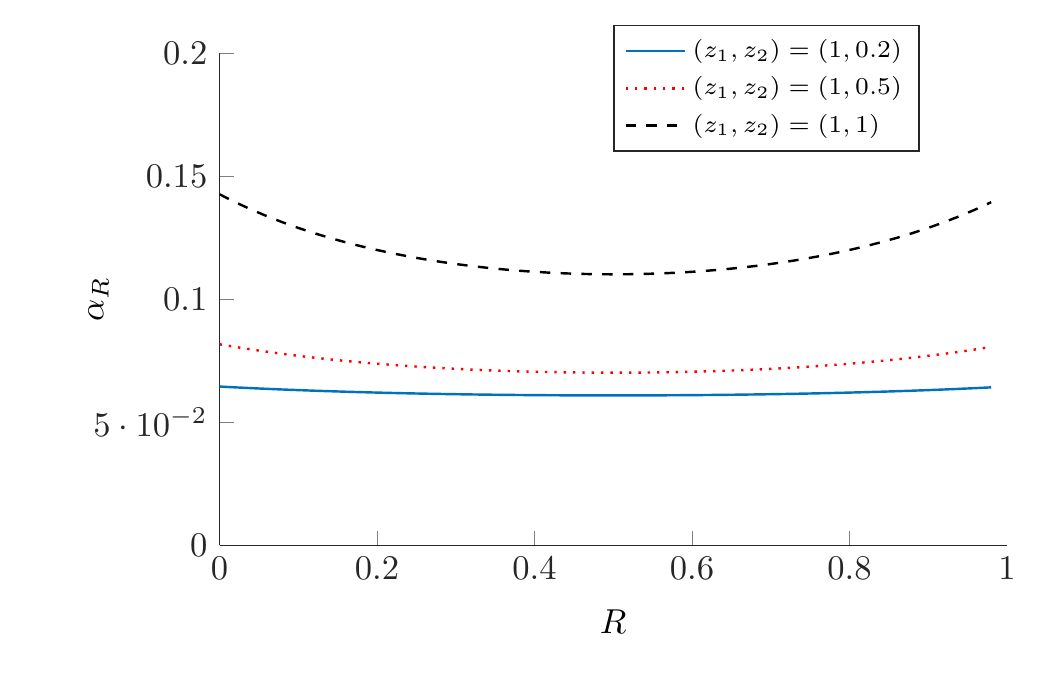}
\caption{Plots of the function $R \mapsto \alpha_R$ for three sets of parameters $(z_1,z_2)$.}
\label{fig:alphaR}
\end{figure}

\begin{figure}[ht]
\centering
\begin{tabular}{cc}
 \hspace{-1cm} \includegraphics[width=9cm]{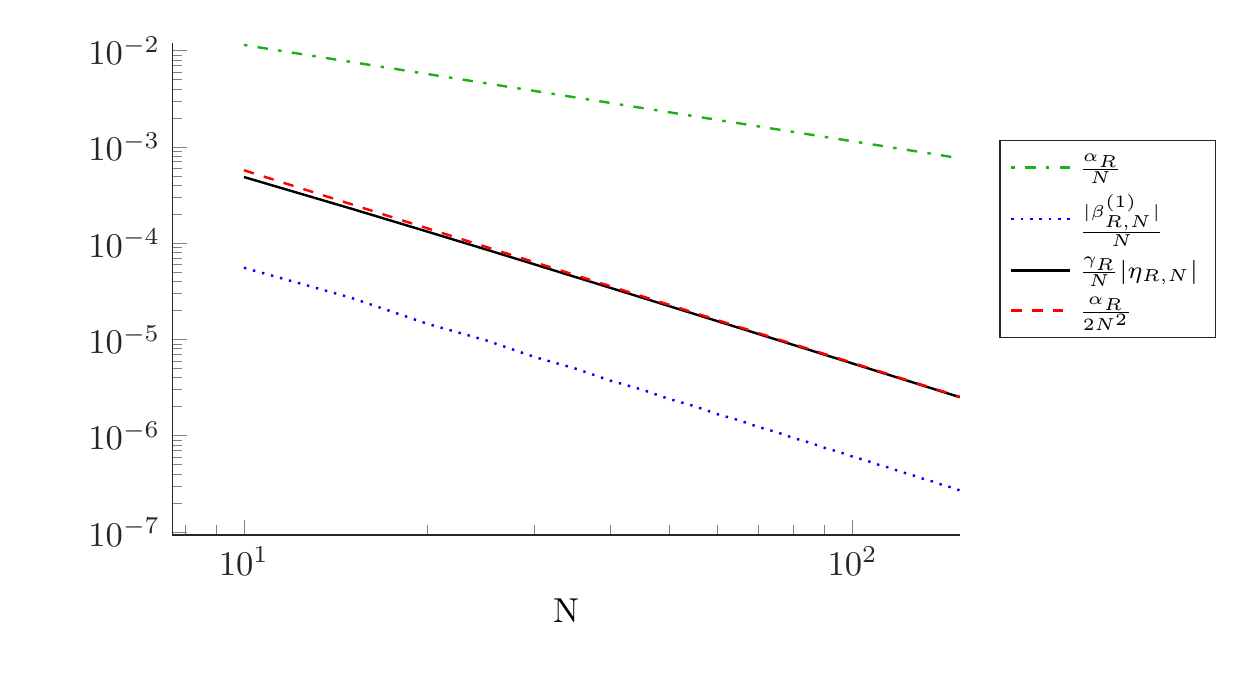}
& \hspace{-1cm} \includegraphics[width=9cm]{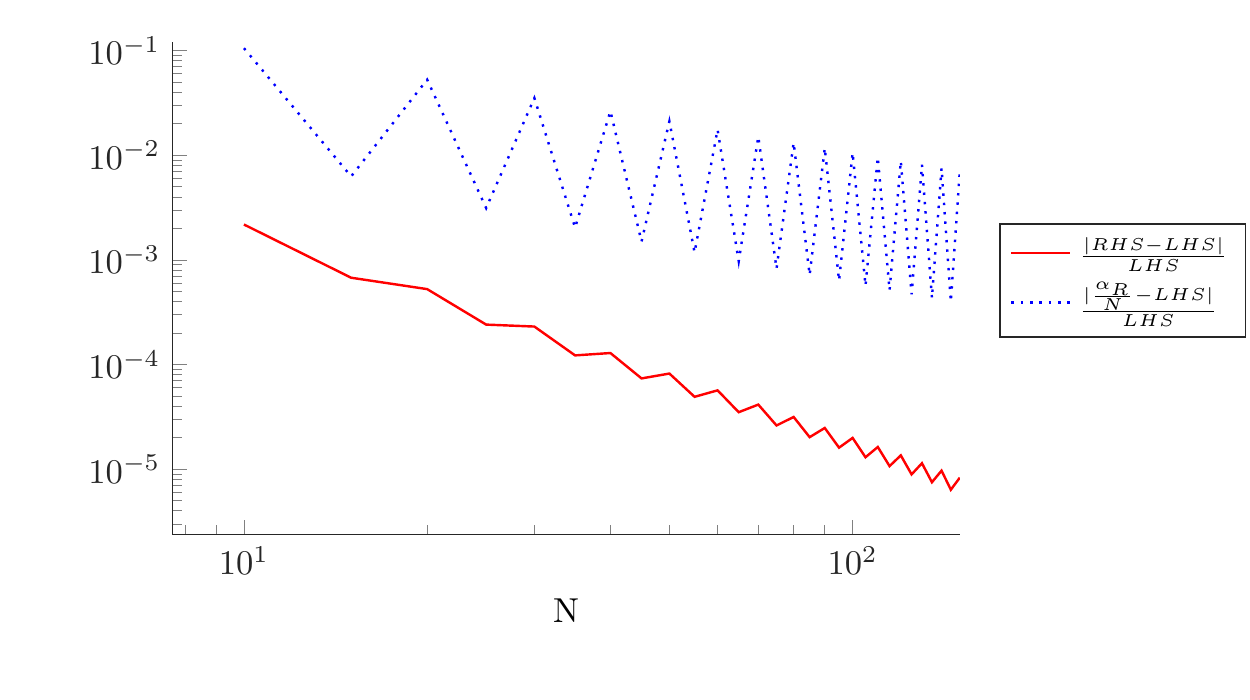} \\ 
 \hspace{-1cm} \includegraphics[width=9cm]{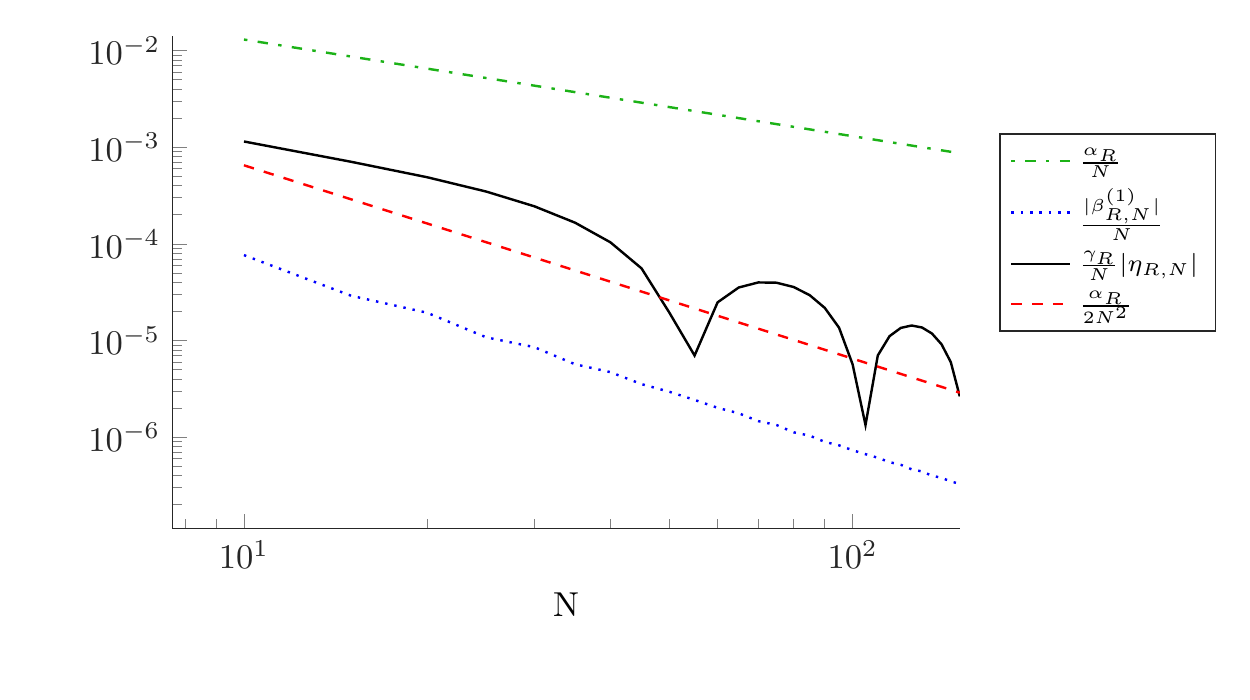} 
 & \hspace{-1cm} \includegraphics[width=9cm]{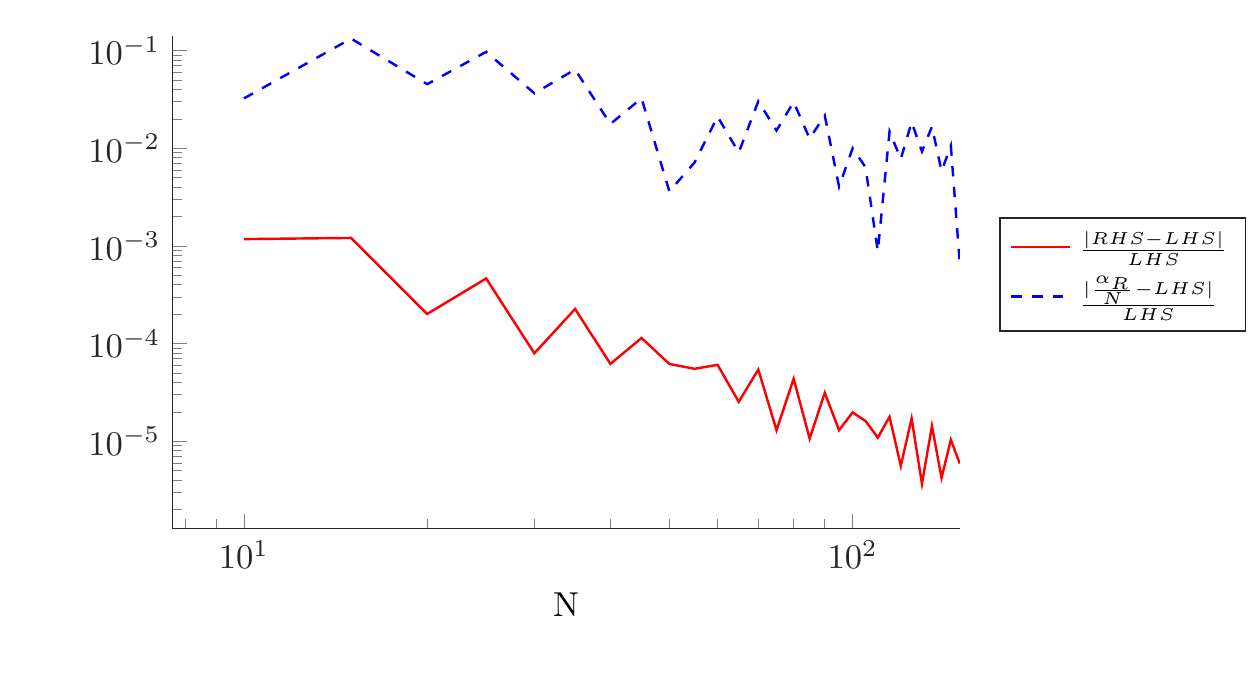}
\end{tabular}
\vspace{-1cm}
\caption{Convergence plots of the four quantities  $\frac{\alpha_R}N$,  $\frac{\alpha_R}{2N^2}$, $\frac{|\beta^{(1)}_{R,N}|}{N}$, and $\frac{\gamma_R}{N} |\eta_{R,N}|$ (left) and plots of 
$\frac{ |(\frac{\alpha_R}N - \frac{\alpha_R}{2N^2}  +\frac{\beta^{(1)}_{R,N}}{N} + \frac{\gamma_R}{N} \eta_{R,N})- (E_{R,N}-E_R)|}{E_{R,N}-E_R}$ and
$\frac{ |\frac{\alpha_R}N - (E_{R,N}-E_R)|}{E_{R,N}-E_R}$ (right).
Top: $z_1=z_2=1, R = 0.3$. Bottom: $z_1=z_2=1, R = 0.09$.
}
\label{fig:resnum}
\end{figure}

\begin{figure}[ht]
\centering
\includegraphics[scale=1]{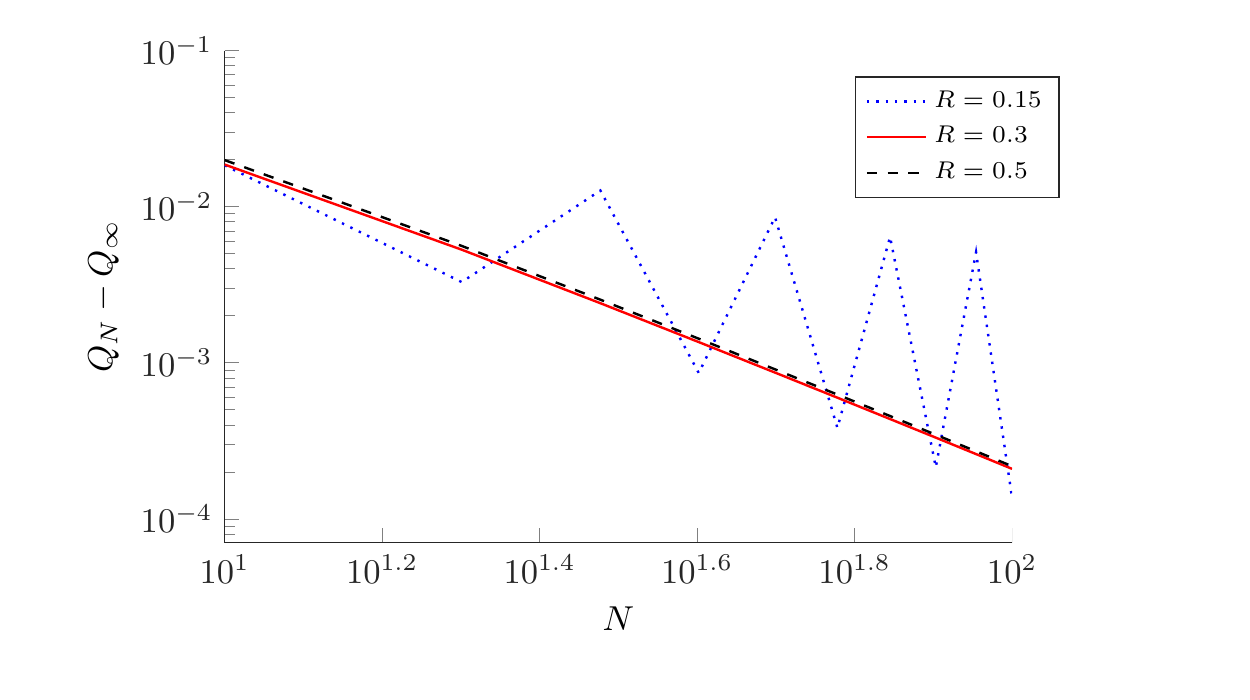}
\caption{Plot of $Q_N-Q_\infty$ for three values of $R$.}
\label{fig:convQN}
\end{figure}
Finally, we observe on Figure \ref{fig:convQN} that $Q_N$ converges to the asymptotic value $Q_\infty$ when $N$ goes to infinity very smoothly for large values of $R$, and with oscillations when $R$ becomes close to zero. Moreover, $Q_N-Q_\infty$ is of order $N^{-2}$.

\section{Appendix: proof of Theorem \ref{th:th1}}

In the sequel, $z_1$ and $z_2$ are fixed positive real numbers. We endow the functional spaces $L^2_{\rm per}$ and $H^1_{\rm per}$ with their usual scalar products
$$
\langle u|v \rangle_{L^2_{\rm per}} := \int_0^1 u(x) v(x) \, dx \quad \mbox{and} \quad \langle u|v \rangle_{H^1_{\rm per}} := \langle u|v \rangle_{L^2_{\rm per}} + \langle u'|v' \rangle_{L^2_{\rm per}}.
$$
More generally, we endow the Sobolev space
$$
H^s_{\rm per} := \left\{v(x) =\sum_{k \in \Z} \widehat v_k e^{2i\pi kx} \; \bigg|  \; \widehat v_k \in \C, \;  \widehat v_{-k} =\overline{\widehat v_k}, \; \sum_{k \in \Z} (1+(2\pi k)^2)^s |\widehat v_k|^2 < \infty \right\}, 
$$
$s \in \R$, with the scalar product defined by
$$
 \langle u|v \rangle_{H^s_{\rm per}} := \sum_{k \in \Z} (1+(2\pi k)^2)^s \, \overline{\widehat u_k} \, \widehat v_k.
$$
Note that the above two definitions of $\langle u|v \rangle_{H^1_{\rm per}}$ coincide and that $H^0_{\rm per}=L^2_{\rm per}$.  We also denote by $\Pi_N$ the orthogonal projection on $X_N$ for the $L^2_{\rm per}$  (and also $H^s_{\rm per}$) scalar product and by $\Pi_N^\perp = 1 - \Pi_N$.

\medskip

We first recall some useful results on the convergence of $(u_{R,N},E_{R,N})$ to $(u_R,E_R)$. 

\begin{lemma}
\label{lem:lem0}
 Let $R\in (0,1)$. Let $(u_R,E_R)$ be the ground state of the continuous problem \eqref{eq:pb_vare}, and $(u_{N,R},E_{R,N})$ be a ground state of the discretized problem \eqref{eq:pb_varN}. Then, for all $\epsilon > 0$ and all $0 \le s < 3/2$, there exists $C_{s,\epsilon} \in \R_+$ such that
 \begin{equation}\label{eq:uN-u}
 \|u_{R,N}-u_R\|_{H^s_{\rm per}} \le \frac{C_{s,\epsilon}}{N^{3/2-s-\epsilon}}.
 \end{equation}
 In addition, there exist $0 < c \le C < \infty$ such that
  \begin{equation}\label{eq:EN-E}
 c \|u_{R,N}-u_R\|_{H^1_{\rm per}}^2 \le E_{R,N}-E_R \le C \|u_{R,N}-u_R\|_{H^1_{\rm per}}^2 ,
  \end{equation}
 and for all $\epsilon > 0$, there exists  $C_\epsilon \in \R_+$ such that
  \begin{equation}\label{eq:uN(0)-u(0)}
  |u_{R,N}(0)-u_R(0)|+ |u_{R,N}(R)-u_R(R)| \le \frac{C_\epsilon}{N^{1-\epsilon}}.
  \end{equation}
 \end{lemma}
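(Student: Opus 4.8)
The plan is to treat Lemma~\ref{lem:lem0} as the standard \emph{a priori} error analysis of a conforming planewave Galerkin discretization of the lowest eigenpair of a self-adjoint, bounded-below operator with compact resolvent, the only non-generic features being that $u_R$ has two cusps (so it only belongs to $H^s_{\rm per}$ for $s<3/2$) and that $\Pi_N$ is the $H^s_{\rm per}$-orthogonal projection for \emph{every} $s$ at once. Write $a(u,v):=\int_0^1u'v'-z_1u(0)v(0)-z_2u(R)v(R)$ for the bilinear form in \eqref{eq:pb_vare}. First I record the regularity input: testing \eqref{eq:pb_vare} against smooth periodic functions shows $-u_R''=E_Ru_R+z_1u_R(0)\sum_{m\in\Z}\delta_m+z_2u_R(R)\sum_{m\in\Z}\delta_{m+R}$ in $\mathcal D'(\R)$, hence, using $E_R<0$, $\widehat{(u_R)}_k=(z_1u_R(0)+z_2u_R(R)e^{-2\pi ikR})/((2\pi k)^2-E_R)$ and $|\widehat{(u_R)}_k|\le C(1+|k|)^{-2}$. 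Therefore $u_R\in H^s_{\rm per}$ for all $s<3/2$ and $\|u_R-\Pi_Nu_R\|_{H^t_{\rm per}}^2=\sum_{|k|>N}(1+(2\pi k)^2)^t|\widehat{(u_R)}_k|^2\le C_tN^{2t-3}$ for all $t<3/2$. I also use that $E_R$ is a simple eigenvalue, isolated from the rest of the spectrum by a gap $E_R^{(2)}-E_R>0$ (the one-dimensional Sturm--Liouville fact behind the positivity of $u_R$).

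I then shift the operator: choose $M>0$ large enough that $\tilde a:=a+M\langle\cdot,\cdot\rangle_{L^2_{\rm per}}$ is continuous and coercive on $H^1_{\rm per}$, which is possible because the trace terms obey $z_1v(0)^2+z_2v(R)^2\le\tfrac12\|v'\|_{L^2}^2+C\|v\|_{L^2}^2$ via the one-dimensional inequality $\|v\|_{L^\infty(\R/\Z)}^2\le\delta\|v'\|_{L^2}^2+(1+\delta^{-1})\|v\|_{L^2}^2$. Then \eqref{eq:pb_vare} is a symmetric coercive eigenvalue problem with simple lowest eigenvalue $\mu_R=E_R+M$, and \eqref{eq:pb_varN} is its conforming Galerkin approximation in $X_N\subset H^1_{\rm per}$; the positivity constraint in \eqref{eq:pb_varN} selects the branch converging to $u_R$. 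Classical spectral (Babu\v{s}ka--Osborn) approximation theory gives $\|u_{R,N}-u_R\|_{H^1_{\rm per}}\le C\,\mathrm{dist}_{H^1_{\rm per}}(u_R,X_N)\le C\|u_R-\Pi_Nu_R\|_{H^1_{\rm per}}\le CN^{-1/2}$, and, using $\Pi_Nu_R$ as a trial function in the discrete Rayleigh quotient, $0\le E_{R,N}-E_R\le C\|u_R-\Pi_Nu_R\|_{H^1_{\rm per}}^2\le CN^{-1}$; this already yields the case $s=1$ of \eqref{eq:uN-u}.

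For \eqref{eq:EN-E}, set $w:=u_{R,N}-u_R$; the two weak formulations and $\|u_R\|_{L^2}=\|u_{R,N}\|_{L^2}=1$ give the exact identity $E_{R,N}-E_R=a(w,w)-E_R\|w\|_{L^2}^2$, and continuity of $a$ yields the upper bound. For the lower bound, split $w=\langle u_R,w\rangle u_R+w^\perp$ with $w^\perp\perp_{L^2}u_R$; since $\tilde a(u_R,w^\perp)=\mu_R\langle u_R,w^\perp\rangle=0$ and $\tilde a(w^\perp,w^\perp)\ge\mu_R^{(2)}\|w^\perp\|_{L^2}^2$ with $\mu_R^{(2)}:=E_R^{(2)}+M$, one gets $a(w,w)-E_R\|w\|_{L^2}^2=\tilde a(w^\perp,w^\perp)-\mu_R\|w^\perp\|_{L^2}^2\ge(1-\mu_R/\mu_R^{(2)})\tilde a(w^\perp,w^\perp)\ge c\|w^\perp\|_{H^1_{\rm per}}^2$, and $|\langle u_R,w\rangle|=\tfrac12\|w\|_{L^2}^2\to0$ forces $\|w\|_{H^1_{\rm per}}\le2\|w^\perp\|_{H^1_{\rm per}}$ for large $N$ (finitely many $N$ absorbed in the constant). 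The remaining cases $s<1$ of \eqref{eq:uN-u} I would obtain by an Aubin--Nitsche duality argument adapted to the planewave setting: for $g\in L^2_{\rm per}$ the adjoint solution $\psi_g$ of $\tilde a(v,\psi_g)=\langle v,g\rangle_{L^2}$ again lies in $H^{3/2-\epsilon}_{\rm per}$ with $\|\psi_g\|_{H^{3/2-\epsilon}_{\rm per}}\le C_\epsilon\|g\|_{L^2}$, and feeding the approximation property of $X_N$ for $\psi_g$ into the Galerkin orthogonality satisfied by $w$ bounds $\langle w,g\rangle$, hence $\|w\|_{L^2_{\rm per}}$; the intermediate $s$ then follow by interpolation against the $H^1_{\rm per}$ bound. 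Finally \eqref{eq:uN(0)-u(0)} is immediate from \eqref{eq:uN-u} with $s=1/2+\epsilon$ and the embedding $H^{1/2+\epsilon}_{\rm per}\hookrightarrow C^0(\R/\Z)$.

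The routine parts are the regularity estimate, the coercivity shift, the C\'ea/Rayleigh-quotient bounds, and the equivalence \eqref{eq:EN-E}. The main obstacle is extracting the sharp rate from the $L^2$ (hence intermediate and pointwise) duality estimate: a crude duality argument loses roughly a half power of $N$, because \emph{both} $u_R$ and the adjoint solution $\psi_g$ only belong to $H^{3/2-\epsilon}_{\rm per}$. Closing this gap requires exploiting the special structure of the planewave discretization — $\Pi_N$ commutes with $-d^2/dx^2$, so the Dirac part of the potential is spread uniformly across frequencies and does not degrade the convergence order — along the lines of the refined planewave error analysis of \cite{M2AN12}.
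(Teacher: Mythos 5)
Much of your outline coincides with the paper's own proof on the routine parts: the regularity of $u_R$ read off the explicit Fourier coefficients, the one-dimensional trace/Young argument making the (shifted) form coercive, the variational bound obtained with the trial function $\Pi_N u_R/\|\Pi_N u_R\|_{L^2_{\rm per}}$ giving $E_{R,N}-E_R\lesssim N^{-1+\epsilon}$, the two-sided bound \eqref{eq:EN-E} (your splitting of $w:=u_{R,N}-u_R$ into its components along and orthogonal to $u_R$ is an equivalent variant of the paper's direct use of the spectral decomposition of $H_R$, and invoking Babu\v{s}ka--Osborn for the $H^1$ rate instead of extracting it from \eqref{eq:EN-E} is a harmless variation), and the deduction of \eqref{eq:uN(0)-u(0)} from the case $s=1/2+\epsilon$ of \eqref{eq:uN-u}.

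There are, however, two genuine gaps. First, you never treat $1<s<3/2$, which is part of the statement; the paper obtains this range by observing that $-u_{R,N}''=z_1u_{R,N}(0)\Pi_N\bigl(\sum_m\delta_m\bigr)+z_2u_{R,N}(R)\Pi_N\bigl(\sum_m\delta_{m+R}\bigr)+E_{R,N}u_{R,N}$ converges to $-u_R''$ in $H^{-1/2-\epsilon}_{\rm per}$, hence $u_{R,N}\to u_R$ in $H^{3/2-\epsilon}_{\rm per}$, and then interpolating with the $s=1$ estimate. Second, and more importantly, for $0\le s<1$ you only sketch a duality argument which, as you say yourself, stops at $\|w\|_{L^2_{\rm per}}\lesssim N^{-1+\epsilon}$ (the adjoint solution is itself only in $H^{3/2-\epsilon}_{\rm per}$), and you delegate the missing half power to an unspecified refined analysis. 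As written this leaves \eqref{eq:uN-u} unproved for $s<1$, and, since you reach \eqref{eq:uN(0)-u(0)} by interpolating through $s=1/2+\epsilon$, it leaves the point-value bound unproved as well: interpolating $N^{-1+\epsilon}$ in $L^2_{\rm per}$ against $N^{-1/2}$ in $H^1_{\rm per}$ only gives $N^{-3/4+\epsilon}$ at $s=1/2$. Moreover, your worry is better founded than you suggest: the missing half power cannot be recovered by a refined argument. Indeed, by \eqref{eq:cka2} and its discrete counterpart $\widehat{u_{R,N}}(k)=\bigl(z_1u_{R,N}(0)+z_2u_{R,N}(R)e^{-2i\pi kR}\bigr)/(k_{R,N}^2+4\pi^2k^2)$ for $|k|\le N$ with $k_{R,N}^2:=-E_{R,N}$, the low-frequency coefficients of $w$ are driven by the $\Theta(N^{-1})$ shifts of the point values and of the eigenvalue; already the $k=0$ coefficient difference is generically of order $N^{-1}$ (in the single-Dirac limit $z_2\to0$ this is an explicit computation and the relevant derivative never vanishes, and it persists for small $z_2>0$ by continuity), so $\|w\|_{L^2_{\rm per}}$ is itself of order $N^{-1}$: for $s<1/2$ the exponent in \eqref{eq:uN-u} saturates at $1$, and the classical Aubin--Nitsche argument the paper appeals to for $s=0$ likewise yields only $N^{-1+\epsilon}$, for exactly the reason you identified. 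Note that this weaker $L^2$ bound is all that is used downstream (in Lemma \ref{lem:lem1} one only needs $(E_{R,N}-E_R)\|w\|^2_{L^2_{\rm per}}=O(N^{-3})$), so the productive way to complete your argument is not a sharper duality estimate but a direct proof of \eqref{eq:uN(0)-u(0)}: write $u_{R,N}(0)-u_R(0)=(u_{R,N}-\Pi_Nu_R)(0)-(\Pi_N^\perp u_R)(0)$, control the second term explicitly as in Lemmas \ref{lem:lem2} and \ref{lem:lem3}, and bound the first through the explicit coefficient differences above, treating $u_{R,N}(0)-u_R(0)$, $u_{R,N}(R)-u_R(R)$ and $E_{R,N}-E_R$ by a two-by-two self-consistency argument rather than through \eqref{eq:uN-u} with $s=1/2+\epsilon$.
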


\begin{proof}
We denote by $C^0_{\rm per}$ the space of continuous $1$-periodic functions from $\R$ to $\R$ endowed with the norm defined by
$$
\forall u \in C^0_{\rm per}, \quad \|u\|_{C^0_{\rm per}} :=\max_{x \in \R}|u(x)|.
$$
Recall that $H^s_{\rm per}$ is continuously embedded in $C^0_{\rm per}$ for all $s > 1/2$. In particular, $H^1_{\rm per} \hookrightarrow C^0_{\rm per}$ and there exists $K \in \R_+$ such that
\begin{equation}\label{eq:boundC0}
\forall u \in H^1_{\rm per}, \quad \|u\|_{C^0_{\rm per}} \le  K \|u\|_{H^{3/4}_{\rm per}} \le K \|u\|_{H^1_{\rm per}}^{3/4} \|u\|_{L^2_{\rm per}}^{1/4}.
\end{equation}
In particular, the bilinear form 
$$
\forall (u,v) \in H^1_{\rm per} \times H^1_{\rm per}, \quad a_R(u,v)=\int_0^1 u'v' - z_1u(0)v(0) - z_2u(R)v(R)
$$
is well-defined, symmetric, and continuous on $H^1_{\rm per} \times H^1_{\rm per}$, and we have
\begin{align*}
\forall u \in H^1_{\rm per}, \quad a_R(u,u) &\ge \|u\|_{H^1_{\rm per}}^2 - (z_1+z_2) K^2 \|u\|_{H^1_{\rm per}}^{3/2} \|u\|_{L^2_{\rm per}}^{1/2} - \|u\|_{L^2_{\rm per}}^2 \\ &\ge \frac 12 \|u\|_{H^1_{\rm per}}^2 - \left( 1+\frac{27}{32}(z_1+z_2)^4K^8 \right) \|u\|_{L^2_{\rm per}}^2, 
\end{align*}
using Young's inequality.
The quadratic form $H^1_{\rm per} \ni u \mapsto a_R(u,u) \in \R$ therefore is bounded below and closed. We denote by $H_R$ the unique self-adjoint operator on $L^2_{\rm per}$ associated to $a_R(\cdot,\cdot)$ (see e.g.~\cite[Theorem VIII.15]{ReedSimon1}). Formally,
$$
H_R = - \frac{d^2}{dx^2} - z_1 \sum_{m \in \Z} \delta_m - z_2 \sum_{m \in \Z} \delta_{m+R}.
$$
The domain of $H_R$ being a subspace of $H^1_{\rm per}$, which is itself compactly embedded in $L^2_{\rm per}$, the spectrum of $H_R$ is purely discrete: it consists of an increasing sequence of eigenvalues of finite multiplicities going to $+\infty$. It is easily seen that its ground state eigenvalue $E_R$ is simple. Let us denote by $\mu_R > 0$ the gap between the lowest two eigenvalues of $H_R$. A classical calculation shows that
\begin{align*}
E_{R,N}-E_R &= a_R(u_{R,N}-u_R,u_{R,N}-u_R)-E_R \|u_{R,N}-u_R\|^2_{L^2_{\rm per}} \\
&= \langle u_{R,N} | H_R | u_{R,N}\rangle - E_R.
\end{align*}
First, since $E_R < 0$, we have
$$
E_{R,N}-E_R \le a_R(u_{R,N}-u_R,u_{R,N}-u_R) \le M_R \|u_{R,N}-u_R\|_{H^1_{\rm per}}^2,
$$
where $M_R$ is the continuity constant of $a_R$, which proves the second inequality in \eqref{eq:EN-E}. Second, since $\|u_R\|_{L^2_{\rm per}}=\|u_{R,N}\|_{L^2_{\rm per}}=1$, we have on the one hand
\begin{align*}
	 E_{R,N}-E_R  & = \langle u_{R,N} | H_R | u_{R,N}\rangle - E_R \ge \left( E_R |\langle u_{R,N} |u_R\rangle_{L^2_{\rm per}}|^2 +(E_R+ \mu_R) \left(1- |\langle u_{R,N} |u_R\rangle_{L^2_{\rm per}}|^2 \right) \right) - E_R  \\
	 &= \mu_R \left(1- |\langle u_{R,N} |u_R\rangle_{L^2_{\rm per}}|^2 \right) \ge \mu_R \left(1- \langle u_{R,N} |u_R\rangle_{L^2_{\rm per}} \right)  = \frac{\mu_R}2 \|u_{R,N}-u_R\|^2_{L^2_{\rm per}},
\end{align*}
and, on the other hand,
$$
E_{R,N}-E_R \ge \frac 12  \|u_{R,N}-u_R\|_{H^1_{\rm per}}^2 -  \left( 1+\frac{27}{32}(z_1+z_2)^4K^8 +E_R \right) \|u_{R,N}-u_R\|_{L^2_{\rm per}}^2.
$$
Combining the above two inequalities yields the first inequality in \eqref{eq:EN-E}. Hence, \eqref{eq:EN-E} is proved. 

We deduce from the min-max principle that for each $v_N \in X_N$ such that $\|v_N\|_{L^2_{\rm per}}=1$, we have 
\begin{align*}
E_{R,N}-E_R & \le a_R(v_N,v_N)-E_R = a_R(v_N-u_R,v_N-u_R) - E_R  \|v_N-u_R\|^2_{L^2_{\rm per}} \\ 
& \le \left(M_R-E_R\right) \|v_N-u_R\|_{H^1_{\rm per}}^2.
\end{align*}
Since $z_1 \sum_{m \in \Z} \delta_m+z_2 \sum_{m \in \Z} \delta_{m+R} \in H^{-1/2-\epsilon}_{\rm per}$ for all $\epsilon>0$, we have that $u_R \in H^{3/2-\epsilon}_{\rm per}$. Applying the above estimate to $v_N=\|\Pi_Nu_R\|_{L^2_{\rm per}}^{-1} \Pi_Nu_R$, we get
$E_{R,N}-E_R \le \frac{C_\epsilon}{N^{1-\epsilon}}$. Combining with \eqref{eq:EN-E}, we obtain \eqref{eq:uN-u} for $s=1$. Together with \eqref{eq:boundC0}, this implies in addition that $(u_{R,N})_{N \in \N}$ converges to $u_R$ in $C^0_{\rm per}$. Since 
$$
-u_{R,N}''= z_1 u_{R,N}(0) \Pi_N \left( \sum_{k \in \Z} \delta_m \right) + z_2 u_{R,N}(R) \Pi_N \left( \sum_{k \in \Z} \delta_{m+R} \right) + E_{R,N} u_{R,N},
$$
and the right hand-side converges to $-u_R''$ in $H^{-1/2-\epsilon}_{\rm per}$ for all $\epsilon > 0$, the sequence $(u_{R,N})_{N \in \N}$ converges to $u_R$ in $H^{3/2-\epsilon}_{\rm per}$ for all $\epsilon > 0$. By interpolation, we then obtain \eqref{eq:uN-u} for all $1 \le s < 3/2$. We finally obtain \eqref{eq:uN-u} for $s=0$ by a classical Aubin-Nitsche argument, and we conclude by interpolation that the result also holds true for all $0 \le s < 1$.

To prove \eqref{eq:uN(0)-u(0)}, we infer from the Sobolev embedding $H^{1/2+\epsilon}_{\rm per} \hookrightarrow C^0_{\rm per}$, that
$$
 |u_{R,N}(0)-u_R(0)|+ |u_{R,N}(R)-u_R(R)| \le 2 \|u_{R,N}-u_R\|_{C^0_{\rm per}} \le 2C'_\epsilon \|u_{R,N}-u_R\|_{H^{1/2+\epsilon}_{\rm per}},
$$
and we conclude using \eqref{eq:uN-u} with $s=1/2+\epsilon$.
\end{proof}

The following lemma provides an expression of the leading term of the energy difference $E_{R,N}-E_R$.
\begin{lemma}
\label{lem:lem1}
Let $z_1,z_2>0$. Let $R\in (0,1)$. Let $(u_R,E_R)$ be the ground state of the continuous problem \eqref{eq:pb_vare}, and $(u_{R,N},E_{R,N})$ be a ground state of the discretized problem \eqref{eq:pb_varN}. Then, for all $\epsilon > 0$, 
\begin{equation}
\label{eq:diff_vp}
	E_{R,N}-E_R = z_1 u_{R,N}(0)(\Pi_N^\perp u_R) (0) + z_2 u_{R,N}(R)(\Pi_N^\perp u_R) (R) + o\left( \frac{1}{N^{3-\epsilon}}\right),
\end{equation}
when $N$ goes to $+\infty$.
\end{lemma}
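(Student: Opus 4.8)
The plan is to adapt the classical analysis of Galerkin eigenvalue approximation, using heavily that the $H^1_{\rm per}$ inner product and the planewave spaces $X_N$ are simultaneously diagonal in the Fourier basis. Introduce the shifted bilinear form $b_R(u,v):=a_R(u,v)-E_R\langle u|v\rangle_{L^2_{\rm per}}$ on $H^1_{\rm per}\times H^1_{\rm per}$, where $a_R$ is the bilinear form from the proof of Lemma~\ref{lem:lem0}. It is symmetric, the continuous equation \eqref{eq:pb_vare} reads $b_R(u_R,v)=0$ for all $v\in H^1_{\rm per}$, and the discrete equation \eqref{eq:pb_varN} reads $b_R(u_{R,N},v_N)=(E_{R,N}-E_R)\langle u_{R,N}|v_N\rangle_{L^2_{\rm per}}$ for all $v_N\in X_N$. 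As recalled in the proof of Lemma~\ref{lem:lem0}, the normalizations $\|u_R\|_{L^2_{\rm per}}=\|u_{R,N}\|_{L^2_{\rm per}}=1$ give the identity $E_{R,N}-E_R=b_R(e_N,e_N)$, where $e_N:=u_{R,N}-u_R$.

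Next I would split the error along $X_N$ and its $L^2_{\rm per}$-orthogonal complement: $e_N=w_N-\phi_N$ with $w_N:=u_{R,N}-\Pi_N u_R\in X_N$ and $\phi_N:=\Pi_N^\perp u_R$. Since $b_R(u_R,\cdot)\equiv 0$, we have $b_R(e_N,v)=b_R(u_{R,N},v)$ for every $v\in H^1_{\rm per}$, so that
\begin{equation*}
E_{R,N}-E_R = b_R(e_N,w_N)-b_R(e_N,\phi_N) = (E_{R,N}-E_R)\langle u_{R,N}|w_N\rangle_{L^2_{\rm per}} - b_R(u_{R,N},\phi_N),
\end{equation*}
the first summand being evaluated with the discrete equation for $v_N=w_N$. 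It then remains to unfold $b_R(u_{R,N},\phi_N)=\int_0^1 u_{R,N}'\phi_N' - z_1 u_{R,N}(0)\phi_N(0) - z_2 u_{R,N}(R)\phi_N(R) - E_R\langle u_{R,N}|\phi_N\rangle_{L^2_{\rm per}}$: the $L^2_{\rm per}$ term vanishes because $u_{R,N}\in X_N\perp\Pi_N^\perp u_R$, and --- this is the one genuine idea of the proof --- the Dirichlet term $\int_0^1 u_{R,N}'\phi_N'$ vanishes as well, since $u_{R,N}'$ is Fourier-supported in $\{|k|\le N\}$ while $\phi_N'=(\Pi_N^\perp u_R)'$ is supported in $\{|k|\ge N+1\}$. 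Hence $b_R(u_{R,N},\phi_N)=-z_1 u_{R,N}(0)(\Pi_N^\perp u_R)(0)-z_2 u_{R,N}(R)(\Pi_N^\perp u_R)(R)$, which produces the claimed formula up to the remainder $r_N:=(E_{R,N}-E_R)\langle u_{R,N}|w_N\rangle_{L^2_{\rm per}}$.

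Finally I would bound $r_N$. Using $\langle u_{R,N}|\Pi_N u_R\rangle_{L^2_{\rm per}}=\langle u_{R,N}|u_R\rangle_{L^2_{\rm per}}$ together with the normalizations, one gets $\langle u_{R,N}|w_N\rangle_{L^2_{\rm per}}=1-\langle u_{R,N}|u_R\rangle_{L^2_{\rm per}}=\tfrac12\|u_{R,N}-u_R\|_{L^2_{\rm per}}^2$, which is $O(N^{-(3-2\epsilon)})$ by \eqref{eq:uN-u} with $s=0$; combined with $E_{R,N}-E_R=O(N^{-(1-2\epsilon)})$, obtained from \eqref{eq:EN-E} and \eqref{eq:uN-u} with $s=1$, this yields $r_N=O(N^{-(4-4\epsilon)})=o(N^{-(3-\epsilon)})$ for every $\epsilon>0$ (taking the $\epsilon$ in \eqref{eq:uN-u} small enough). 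I do not expect a real obstacle here: once the right decomposition is written down, everything is Galerkin bookkeeping plus the Fourier-orthogonality observation above. The only point needing care is that the remainder is genuinely $o(N^{-(3-\epsilon)})$ and not merely $O(N^{-3})$, which is why one must invoke both the $L^2_{\rm per}$ superconvergence of $u_{R,N}$ (order $N^{-3/2+\epsilon}$, squared) \emph{and} the plain convergence $E_{R,N}\to E_R$, so that $r_N$ falls a full power of $N$ below the retained terms.
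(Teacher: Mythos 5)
Your proof is correct and is essentially the paper's argument in different packaging: writing $E_{R,N}-E_R=b_R(e_N,e_N)$ with $e_N=w_N-\phi_N$ reproduces exactly the identity $(E_{R,N}-E_R)\bigl(1-\tfrac12\|u_{R,N}-u_R\|_{L^2_{\rm per}}^2\bigr)=z_1u_{R,N}(0)(\Pi_N^\perp u_R)(0)+z_2u_{R,N}(R)(\Pi_N^\perp u_R)(R)$ that the paper obtains by testing the continuous problem with $u_{R,N}$ and the discrete problem with $\Pi_N u_R$ and subtracting, relying on the same Fourier-orthogonality observation for the $\int_0^1 u_{R,N}'(\Pi_N^\perp u_R)'$ term. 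Your remainder estimate, combining \eqref{eq:uN-u} with $s=0$ and the $O(N^{-(1-\epsilon)})$ bound on $E_{R,N}-E_R$ deduced from \eqref{eq:EN-E} and \eqref{eq:uN-u} with $s=1$, matches the paper's treatment, so nothing is missing.
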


\begin{proof}
The variational formulation \eqref{eq:pb_vare} with $v = u_{R,N}$ gives
\[
	E_R \int_0^1 u_{R,N} u_R = \int_0^1 u_{R,N}' u_R' - z_1 u_{R,N}(0)  u_R(0) - z_2 u_{R,N}(R) u_R(R).
\]
The variational formulation \eqref{eq:pb_varN} with $v_N = \Pi_N u_R$ gives
\[
	E_{R,N} \int_0^1 u_{R,N} (\Pi_N u_R) = \int_0^1 u_{R,N}' (\Pi_Nu_R)' - z_1  u_{R,N}(0) (\Pi_N u_R)(0) - z_2 u_{R,N}(R) (\Pi_N u_R)(R).	
\]
Subtracting these two equalities, and noting first that $\displaystyle \int_0^1 u_{R,N} (\Pi_N u_R) = \int_0^1 u_{R,N} u_R$, and second that $\displaystyle \int_0^1 u_{R,N}' (\Pi_Nu_R)' = \int_0^1 u_{R,N}' u_R'$, since $u_{R,N}\in X_N$ and the orthogonal projection $\Pi_N$ and the derivation commute, we get
\[
	(E_{R,N}-E_R)  \int_0^1 u_{R,N} u_R = z_1 u_{R,N}(0) (\Pi_N^\perp u_R)(0) + z_2 u_{R,N}(R) (\Pi_N^\perp u_R)(R).
\]
Moreover, since $\displaystyle \int_0^1 u_R^2 = \int_0^1 u_{R,N}^2 =1$, we have
\[
	\int_0^1 u_{R,N} u_R = 1 - \frac{1}{2}\int u_R^2 - \frac{1}{2}  \int_0^1 {u_{R,N}}^2 + \int_0^1 u_{R,N} u_R = 1 - \frac{1}{2} \|u_{R,N} - u_R\|_{L^2_{\rm per}}^2.
\]
Hence,
\[
	(E_{R,N}-E_R) \left(1 - \frac{1}{2} \|u_{R,N} - u_R\|_{L^2_{\rm per}}^2 \right)  = z_1 u_{R,N}(0) (\Pi_N^\perp u_R)(0) +  z_2 u_{R,N}(R) (\Pi_N^\perp u_R)(R).
\]
Using estimates \eqref{eq:uN-u} for $s=0$ and \eqref{eq:EN-E}, we obtain that for all $\epsilon > 0$,
\[
\displaystyle	1 - \frac{1}{2} \|u_{R,N} - u_R\|_{L^2_{\rm per}}^2 =1 + o\left( \frac{1}{N^{3-\epsilon}}\right), \quad \mbox{when $N \to +\infty$}. 
\]
This concludes the proof of Lemma~\ref{lem:lem1}.
\end{proof}

The following lemma provides an explicit expression of the quantities $(\Pi_N^\perp u_R)(0)$ and $(\Pi_N^\perp u_R)(R)$ appearing in \eqref{eq:diff_vp}.

\begin{lemma} Let $z_1,z_2>0$. For all $R \in (0,1)$, all $N \in \N$, and all $x \in \R$,
\label{lem:lem2}
\begin{equation}\label{eq:PiNuNx}
(\Pi_N^\perp u_R) (x) =  \sum_{k =N+1}^{+\infty}   \frac{2}{k_R^2+4\pi^2 k^2} \left( z_1 u_R(0) \cos (2\pi kx) + z_2 u_R(R) \cos(2\pi k(x-R)) \right).
\end{equation}
\end{lemma}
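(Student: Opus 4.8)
The plan is to compute the full Fourier expansion of $u_R$ in closed form and then simply discard the low-frequency modes. The starting point is that $u_R$, being the ground state of \eqref{eq:pb_vare}, satisfies on the torus $\R/\Z$ the distributional identity
$$
-u_R'' + k_R^2\, u_R = z_1\, u_R(0)\, \delta_0 + z_2\, u_R(R)\, \delta_R ,
$$
which is precisely the content of the jump conditions recorded after \eqref{eq:uplus} together with $E_R=-k_R^2$; equivalently, one reads it off \eqref{eq:pb_vare} directly by choosing, for each $k\in\Z$, the complex test function $v(x)=e^{2i\pi kx}$ (extending the bilinear form and the $L^2_{\rm per}$ inner product sesquilinearly to $X_N+iX_N$, and using $u_R\in C^0_{\rm per}$, so that $u_R(0)$ and $u_R(R)$ are well defined).

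First I would take Fourier coefficients of this identity. Since the $k$-th Fourier coefficient of $\delta_a$ on the torus is $e^{-2i\pi ka}$, this yields, for every $k\in\Z$,
$$
(4\pi^2 k^2 + k_R^2)\,\widehat{(u_R)}_k = z_1\, u_R(0) + z_2\, u_R(R)\, e^{-2i\pi kR},
$$
hence the closed form $\widehat{(u_R)}_k=\dfrac{z_1\, u_R(0) + z_2\, u_R(R)\, e^{-2i\pi kR}}{4\pi^2 k^2 + k_R^2}$. Note that $4\pi^2 k^2+k_R^2>0$ for all $k$, so there is no small-divisor issue and each mode is uniquely determined. Then I would form $\Pi_N^\perp u_R=\sum_{|k|>N}\widehat{(u_R)}_k\, e^{2i\pi kx}$ and pair the terms $k$ and $-k$ for $k\ge N+1$: using $e^{2i\pi kx}+e^{-2i\pi kx}=2\cos(2\pi kx)$ and $e^{-2i\pi kR}e^{2i\pi kx}+e^{2i\pi kR}e^{-2i\pi kx}=2\cos(2\pi k(x-R))$, together with the fact that the denominator $4\pi^2 k^2+k_R^2$ is even in $k$, the sum collapses exactly to the claimed expression \eqref{eq:PiNuNx}.

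The only point needing a little care — and the closest thing to an obstacle, though a mild one — is justifying that these manipulations hold \emph{pointwise}, i.e. that the series in \eqref{eq:PiNuNx} converges for every $x\in\R$ and represents $(\Pi_N^\perp u_R)(x)$, and that extracting Fourier coefficients of the distributional identity (and evaluating $u_R$ at $0$ and $R$) is legitimate. This follows from the regularity already established in Lemma~\ref{lem:lem0}: since the right-hand side of the distributional identity lies in $H^{-1/2-\epsilon}_{\rm per}$ for all $\epsilon>0$, we have $u_R\in H^{3/2-\epsilon}_{\rm per}$, and because $3/2-\epsilon>1/2$ a Cauchy--Schwarz estimate gives $\sum_{k\in\Z}|\widehat{(u_R)}_k|<\infty$. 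Hence the Fourier series of $u_R$, and a fortiori that of $\Pi_N^\perp u_R$, converges absolutely and uniformly; in particular $u_R$ and $\Pi_N^\perp u_R$ are continuous, their values at $0$ and $R$ are the sums of the corresponding numerical series, and the rearrangement into the cosine form above is unconditionally valid. Once this is in place, \eqref{eq:PiNuNx} is immediate. One could alternatively bypass the distributional identity entirely and compute $\widehat{(u_R)}_k$ by integrating the piecewise expression \eqref{eq:uplus}, but the route above is shorter and makes the structure of the denominator transparent.
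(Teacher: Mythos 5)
Your proposal is correct and follows essentially the same route as the paper: take Fourier coefficients of the distributional eigenvalue equation (using $E_R=-k_R^2$ and that the potential term reduces to $z_1u_R(0)\delta_0+z_2u_R(R)\delta_R$ on the torus) to obtain $\widehat{u_R}(k)=\frac{z_1u_R(0)+z_2u_R(R)e^{-2i\pi kR}}{k_R^2+4\pi^2k^2}$, then sum the tail $|k|>N$ pairing $k$ with $-k$ to produce the cosine form. Your added remarks on absolute convergence of the Fourier series (via $u_R\in H^{3/2-\epsilon}_{\rm per}$) only make explicit a justification the paper leaves implicit.
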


\begin{proof}
In order to estimate $(\Pi_N^\perp u_R)(x)$, we first need to compute the Fourier coefficients of $u_R$
\begin{equation}
\label{eq:cka}
\forall k \in \Z, \quad \widehat{u_R}(k) := \int_0^1 u_R(x) e^{-2i\pi k x} \, dx.
\end{equation}
Using the periodicity of $u_R$, we can rewrite the first equation in \eqref{eq:pb_cont} as
$$
-u_R'' - z_1 u_R(0) \left( \sum_{m \in \Z} \delta_m \right) - z_2 u_R(R) \left( \sum_{m \in \Z} \delta_{m+Z} \right) = E_R u_R.
$$
Taking the Fourier transform, and using the relation $E_R=-k_R^2$, we obtain
$$
4\pi^2 k^2 \widehat{u_R}(k)	- z_1 u_R(0) -z_2 u_R(R) e^{-2i\pi kR} = -k_R^2 \widehat{u_R}(k).
$$
Hence, for all $k\in \Z$,
\begin{equation}
\label{eq:cka2}
	\widehat{u_R}(k)	=   \frac{1}{k_R^2+4\pi^2 k^2} \left( z_1 u_R(0) + z_2 u_R(R) e^{-2i \pi k R} \right).
\end{equation}
Consequently,
\begin{align*}
(\Pi_N^\perp u_R) (x) &= \sum_{k \in \Z, \; |k| > N} \widehat{u_R}(k) e^{2i\pi k x} = \sum_{k \in \Z, \; |k| > N} \frac{1}{k_R^2+4\pi^2 k^2} \left( z_1 u_R(0) + z_2 u_R(R) e^{-2i \pi k R} \right) e^{2i\pi k x} \\
&=  \sum_{k =N+1}^{+\infty}   \frac{2}{k_R^2+4\pi^2 k^2} \left( z_1 u_R(0) \cos (2\pi kx) + z_2 u_R(R) \cos(2\pi k(x-R)) \right),
\end{align*}
which completes the proof of Lemma~\ref{lem:lem2}.
\end{proof}

The last technical lemma we need provides an estimates of the series in \eqref{eq:PiNuNx} for $x=0$ and $x=R$. 

\begin{lemma} 
\label{lem:lem3}
Let $\R \ni R \mapsto k_R \in \R$ be a positive bounded function and $M=\sup_{R \in \R} k_R^2$. We denote by 
$$
f_N(R):=\sum_{k =N+1}^{+\infty} \frac{1}{k_R^2+4\pi^2k^2} \quad \mbox{and} \quad g_N(R):=\sum_{k =N+1}^{+\infty} \frac{\cos(2\pi k R)}{k_R^2+4\pi^2k^2}.
$$
For all $R \in \R \setminus \Z$ we have 
\begin{equation}\label{eq:bound_uN}
f_N(R) = \frac{1}{4\pi^2 N}  a_N+ \phi_N(R), \; \mbox{ with }  \;   a_N =  N \sum_{k=N+1}^{+\infty} \frac{1}{k^2}, \; \; |\phi_N(R)| \le \frac{M}{48\pi^4N^3},
\end{equation}
and 
\begin{equation}\label{eq:bound_vN} 
g_N(R)= \frac{1}{4\pi^2 N}\eta_{N,R} + \psi_N(R),  \; \mbox{ with }  \; \eta_{N,R} = N \sum_{k=N+1}^{+\infty} \frac{\cos(2\pi k R)}{k^2}, \;\; \left| \psi_N(R) \right| \le \frac{M}{48\pi^4N^3}.
\end{equation}
Besides, 
\begin{equation}\label{eq:aNetaRN}
a_N = 1 + \frac{1}{2N} + O\left(\frac{1}{N^2}\right) \qquad \mbox{and} \qquad \left| \eta_{N,R} \right| \le \min\left( 1, \frac{2+\frac{\pi^3}8}{|\sin (\pi R)|N} \right).
\end{equation}
\end{lemma}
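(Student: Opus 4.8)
\emph{Proof strategy.} The skeleton of both \eqref{eq:bound_uN} and \eqref{eq:bound_vN} is already visible once one splits each summand via the elementary identity
\[
\frac{1}{k_R^2+4\pi^2k^2}=\frac{1}{4\pi^2k^2}-\frac{k_R^2}{4\pi^2k^2\left(k_R^2+4\pi^2k^2\right)},\qquad k\ge 1.
\]
Summing over $k\ge N+1$, the first term contributes $\frac{1}{4\pi^2}\sum_{k>N}k^{-2}=\frac{a_N}{4\pi^2N}$ to $f_N(R)$ and $\frac{1}{4\pi^2}\sum_{k>N}\cos(2\pi kR)k^{-2}=\frac{\eta_{N,R}}{4\pi^2N}$ to $g_N(R)$, which are exactly the announced main terms; the second term defines $\phi_N(R)$ and $\psi_N(R)$ respectively. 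Thus the whole lemma reduces to four essentially independent and elementary estimates, and the plan is to dispatch them in turn.

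First, the remainders. Using $0<k_R^2\le M$ and $k_R^2+4\pi^2k^2\ge 4\pi^2k^2$, the general term of $\phi_N(R)$ is bounded in modulus by $\frac{M}{16\pi^4k^4}$, and the general term of $\psi_N(R)$ by the same quantity since $|\cos(2\pi kR)|\le1$. An integral comparison, $\sum_{k>N}k^{-4}\le\int_N^{+\infty}x^{-4}\,dx=\frac{1}{3N^3}$, then yields $|\phi_N(R)|,|\psi_N(R)|\le\frac{M}{48\pi^4N^3}$. Second, the asymptotics of $a_N=N\sum_{k>N}k^{-2}$: a crude two-sided integral comparison only gives $a_N=1+O(N^{-1})$, so I would instead compare with the \emph{midpoint} integral, $\sum_{k>N}k^{-2}=\int_{N+1/2}^{+\infty}x^{-2}\,dx+\sum_{k>N}\big(k^{-2}-\int_{k-1/2}^{k+1/2}x^{-2}\,dx\big)=\frac{1}{N+1/2}+O(N^{-3})$, the remainder being $O(N^{-3})$ because each midpoint-rule error is $O(k^{-4})$ by Taylor expansion of $x\mapsto x^{-2}$; expanding $N/(N+1/2)$ in powers of $N^{-1}$ then gives the two-term expansion of $a_N$ in \eqref{eq:aNetaRN}.

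Third and last, the bounds on $\eta_{N,R}=N\sum_{k>N}\cos(2\pi kR)k^{-2}$. The bound $|\eta_{N,R}|\le1$ is immediate: $\big|\sum_{k>N}\cos(2\pi kR)k^{-2}\big|\le\sum_{k>N}k^{-2}\le\int_N^{+\infty}x^{-2}\,dx=N^{-1}$. For the bound in $1/|\sin(\pi R)|$ one must exploit the oscillation of $k\mapsto\cos(2\pi kR)$, and the natural tool is summation by parts (Abel/Dirichlet). Setting $\Sigma_k:=\sum_{j=N+1}^{k}\cos(2\pi jR)=\mathrm{Re}\big(\sum_{j=N+1}^{k}e^{2\pi ijR}\big)$, a truncated geometric series with ratio $e^{2\pi iR}\ne1$, one has $|\Sigma_k|\le\frac{2}{|e^{2\pi iR}-1|}=\frac{1}{|\sin(\pi R)|}$ since $|e^{2\pi iR}-1|=2|\sin(\pi R)|$; summing by parts against the decreasing weights $k^{-2}$ (the boundary term vanishing because $\Sigma_N=0$ and $k^{-2}\to0$), the weight increments telescope to $(N+1)^{-2}$, whence $\big|\sum_{k>N}\cos(2\pi kR)k^{-2}\big|\le\frac{1}{|\sin(\pi R)|(N+1)^2}$ and $|\eta_{N,R}|\le\frac{C}{|\sin(\pi R)|N}$.

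I expect the only delicate point to be the bookkeeping of constants needed to land exactly on $C=2+\frac{\pi^3}{8}$ — in particular if one derives the estimate by routing through $g_N(R)$ and re-using the $\psi_N(R)$ bound above, which feeds in the extra $O(N^{-2})$ term; the argument itself produces a smaller admissible constant, so this value is not sharp, but any fixed $C$ is enough for the use made of $\eta_{N,R}$ in Theorem~\ref{th:th1}.
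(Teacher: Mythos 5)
Your proposal is correct, and for the two decompositions it follows the same route as the paper: the splitting $\frac{1}{k_R^2+4\pi^2k^2}=\frac{1}{4\pi^2k^2}-\frac{k_R^2}{4\pi^2k^2(k_R^2+4\pi^2k^2)}$ together with $\sum_{k>N}k^{-4}\le\frac{1}{3N^3}$ is exactly how \eqref{eq:bound_uN} and \eqref{eq:bound_vN} are obtained there. Where you genuinely diverge is the oscillatory bound on $\eta_{N,R}$: the paper differentiates the tail series $\sum_{k>N}\cos(2\pi kR)/(4\pi^2k^2)$ twice in the distribution sense, identifies a Dirichlet kernel via the Poisson summation formula, and then Taylor-expands around $R=\tfrac12$ with integral remainder, estimating three separate integrals --- that is where the constant $2+\frac{\pi^3}{8}$ comes from. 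Your Abel summation against the decreasing weights $k^{-2}$, with the geometric-sum bound $|\Sigma_k|\le 1/|\sin(\pi R)|$, is considerably shorter and yields the sharper bound $|\eta_{N,R}|\le N/((N+1)^2|\sin(\pi R)|)\le 1/(N|\sin(\pi R)|)$, which implies the stated inequality and is all that Theorem~\ref{th:th1} needs; the paper's constant is not sharp, as you suspected. One point you should not gloss over, however: your midpoint-rule computation of $a_N$ is correct, but it gives $a_N=\frac{N}{N+1/2}+O(N^{-2})=1-\frac{1}{2N}+O(N^{-2})$, i.e.\ the \emph{opposite} sign of the $1/(2N)$ term from the one printed in \eqref{eq:aNetaRN}. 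The same negative sign comes out of the paper's own comparison with $\int_{k-1}^{k}t^{-2}\,dt$ (each term $k^{-2}-\int_{k-1}^k t^{-2}\,dt$ is negative), and it is precisely the sign needed to produce the $-\alpha_R/(2N^2)$ correction in \eqref{eq:Ediff}; the plus sign in the lemma statement is a typo. So rather than asserting that your expansion ``gives the two-term expansion in \eqref{eq:aNetaRN}'', state the expansion you actually obtain and flag the sign discrepancy explicitly.
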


\begin{proof} The function $f_N$ can be decomposed as 
$$
f_N(R) = \frac{1}{4\pi^2 N}  a_N+ \phi_N(R),
$$
where 
$$
\phi_N(R) = f_N(R)-\frac{1}{4\pi^2 N}  a_N=-\frac{k_R^2}{4\pi^2} \sum_{k=N+1}^{+\infty} \frac{1}{k^2(k_R^2+4\pi^2 k^2)}.
$$
We have on the one hand
$$
a_N=1 + N \sum_{k=N+1}^{+\infty} \left( \frac{1}{k^2} - \int_{k-1}^k \frac{dt}{t^2} \right) = 1 + N \sum_{k=N+1}^{+\infty} \frac{1}{k^2} \int_0^1 \left( 1- \left(1-\frac sk \right)^{-2} \right) \, ds = 1 + \frac{1}{2N} + O\left(\frac{1}{N^2}\right),
$$
and on the other hand, by a sum-integral comparison,
$$
|\phi_N(R)| \le \frac{M}{4\pi^2} \sum_{k=N+1}^{+\infty} \frac{1}{4\pi^2k^4} \le \frac{M}{48\pi^4N^3}.
$$

Thus, (\ref{eq:bound_uN}) and the first statement of \eqref{eq:aNetaRN} are proved. For $N \in \N$ and $R \in \R$, we set
$$
h_N(R) := \sum_{k =N+1}^{+\infty} \frac{\cos(2 \pi k R)}{4\pi^2k^2} = \frac{1}{4\pi^2N}\eta_{R,N}.
$$
We have
\begin{align*}
|\psi_N(R)|=\left| g_N(R)-h_N(R) \right| &= \left|  -  \sum_{k=N+1}^{+\infty}   \frac{k_R^2\cos(2\pi k R)}{4\pi^2k^2(k_R^2+4\pi^2k^2)}\right|  \le M \sum_{k=N+1}^{+\infty}\frac{1}{16\pi^4 k^4} \le \frac{M}{48\pi^4 N^3}.
\end{align*}
Taking the second derivative of $h_N$ in the distribution sense and using Poisson summation formula, we obtain
\begin{align*}
h''_N(R) &= \frac{d^2}{dR^2} \left( \sum_{k =N+1}^{+\infty} \frac{e^{2 i \pi k R}+ e^{-2 i \pi k R}}{8\pi^2k^2}\right) = - \frac 12 \left( \sum_{k \in \Z \, | \, |k| > N} e^{2 i \pi k R} \right) \\
&= - \frac 12 \left( \sum_{k \in \Z} e^{2 i \pi k R} -  \sum_{k=-N}^N e^{2 i \pi k R} \right) = - \frac 12 \sum_{m \in \Z} \delta_m(R) + \frac 12 \frac{\sin\left( (2N+1)\pi R \right)}{\sin(\pi R)}.
\end{align*}
Therefore, $h_N$ is smooth on $\R \setminus \Z$. Since it is 1-periodic, it suffices to study it on the open interval $(0,1)$. Since $h_N\left(\frac 12+t \right) = h_N\left(\frac 12-t \right)$ for all $|t|<\frac 12$, we have $h_N'\left(\frac 12\right) =0$, so that for all $R \in (0,1)$, and using Taylor formula with integral remainder, we get
\begin{align*}
h_N(R) &= h_N\left(\frac 12 \right) + \int_{\frac 12}^R \left( R-t \right) h_N''(t) \; dt = h_N\left(\frac 12 \right) + \frac 12  \int_{\frac 12}^R \left( R-t \right)  \frac{\sin\left( (2N+1)\pi t \right)}{\sin(\pi t)}  \; dt \\
&= h_N\left(\frac 12 \right) + \frac{1}{2(2N+1)^2\pi^2} \left( (-1)^N - \frac{\sin\left( (2N+1)\pi R \right)}{\sin(\pi R)} \right) \\
& \quad -  \frac{1}{2(2N+1)^2\pi^2}  \int_{\frac 12}^R 
\left( 2\pi \frac{\cos(\pi t)}{\sin(\pi t)} + \frac{(R-t)\pi^2(1+\cos^2(\pi t))}{\sin^2(\pi t)}  \right) \frac{\sin\left( (2N+1)\pi t \right)}{\sin(\pi t)}  \; dt.
\end{align*}
Since
$$
\left| h_N\left(\frac 12\right) \right| = \left| \sum_{k=N+1}^{+\infty} \frac{(-1)^k}{4\pi^2k^2} \right| \le \frac{1}{4\pi^2(N+1)^2} \le \frac{1}{4\pi^2N^2},
$$
and since, for all $R \in (0,1/2)$, 
$$
\left| \frac{1}{2(2N+1)^2\pi^2} \left( (-1)^N - \frac{\sin\left( (2N+1)\pi R \right)}{\sin(\pi R)} \right) \right| \le \frac{1}{8\pi^2N^2} \left( 1+\frac{1}{\sin(\pi R)} \right) \le \frac{1}{4\pi^2N^2\sin(\pi R)},
$$
\begin{align*}
\left| \int_{\frac 12}^R
 2\pi \frac{\cos(\pi t)}{\sin(\pi t)}  \frac{\sin\left( (2N+1)\pi t \right)}{\sin(\pi t)}  \; dt \right| &\le 
 2\pi  \int_R^{\frac 12} \frac{\cos(\pi t)}{\sin^2(\pi t)} dt = 2 \left( \frac{1}{\sin(\pi R)}-1 \right),
\end{align*}
and, using the inequalities $2t < \sin(\pi t) <\pi t$ for all $0 < t < \frac 12$,
\begin{align*}
\left| \int_{\frac 12}^R
 \frac{(R-t)\pi^2(1+\cos^2(\pi t))}{\sin^2(\pi t)}  \frac{\sin\left( (2N+1)\pi t \right)}{\sin(\pi t)}  \; dt \right| &\le  2\pi^2  \int_R^{\frac 12}
 \frac{t-R}{\sin^3(\pi t)}    \; dt  \le  \pi^2  \int_R^{\frac 12}  \frac{2t}{\sin^3(\pi t)}    \; dt  \\ 
&\le  \frac{\pi^2}4  \int_R^{\frac 12} \frac{1}{t^2} dt \le \frac{\pi^2}{4R} \le \frac{\pi^3}{4 \sin(\pi R)},
\end{align*}
we finally get
\begin{align*}
|\eta_{N,R}| = \left| 4\pi^2N h_N(R) \right|  & \le  \frac{1}{N}+  \frac{1}{N\sin(\pi R)}+ \frac 1 N  \left( \frac{1}{\sin(\pi R)}-1 \right) + \frac{\pi^3}{8 \sin(\pi R) N} \\
& = \left( 2 + \frac{\pi^3}8 \right) \frac{1}{\sin(\pi R) N}.
\end{align*}
which concludes the proof.
\end{proof}

We are now ready to prove Theorem \ref{th:th1}.

\begin{myproof}[Proof of Theorem \ref{th:th1}]
Combining Lemmata \ref{lem:lem0}, \ref{lem:lem1}, \ref{lem:lem2} and \ref{lem:lem3}, we get that for any $R\in (0,1)$, 
\begin{align*}
E_{R,N}-E_R &= z_1 u_{R,N}(0)(\Pi_N^\perp u_R) (0) + z_2 u_{R,N}(R)(\Pi_N^\perp u_R) (R)  + o\left( \frac{1}{N^{3-\epsilon}}\right)  \quad & \mbox{(Lemma~\ref{lem:lem1})} \\
&=   z_1 u_{R,N}(0) \left( 2z_1u_R(0)f_N(R) + 2z_2u_R(R)g_N(R) \right) \\
&\quad + z_2 u_{R,N}(R) \left( 2z_2u_R(R)f_N(R) + 2z_1u_R(0)g_N(R) \right)  + o\left( \frac{1}{N^{3-\epsilon}}\right)  \quad & \mbox{(Lemma~\ref{lem:lem2})} \\
&=   \left( 2z_1^2 u_{R,N}(0) u_R(0) +  2z_2^2   u_{R,N}(R) u_R(R) \right) f_N(R)   \\
&\quad + 2z_1z_2 \left( u_{R,N}(0) u_R(R) + u_{R,N}(R) u_R(0) \right) g_N(R)  + o\left( \frac{1}{N^{3-\epsilon}}\right)  \quad &  \\
&=   \left( 2z_1^2 u_{R,N}(0) u_R(0) +  2z_2^2   u_{R,N}(R) u_R(R) \right) \frac{1}{4\pi^2 N}a_N   \\
&\quad + 2z_1z_2 \left( u_{R,N}(0) u_R(R) + u_{R,N}(R) u_R(0) \right) \frac{1}{4\pi^2 N}\eta_{R,N}  + o\left( \frac{1}{N^{3-\epsilon}}\right)  \quad & \mbox{(Lemma~\ref{lem:lem3})} \\
&=   \frac{\alpha_R}N a_N + \frac{\beta_{R,N}^{(1)}}N a_N + \frac{\gamma_R}{N^2} \eta_{R,N}   + o\left( \frac{1}{N^{3-\epsilon}}\right) ,
\end{align*}
where we have used the bounds \eqref{eq:uN(0)-u(0)} and \eqref{eq:aNetaRN} to obtain the last equality.
The proof of \eqref{eq:CVQN} easily follows.
\end{myproof}

\section*{Acknowledgments} The authors are grateful to Yvon Maday for useful discussions. 
This work was partially undertaken in the framework of CALSIMLAB, supported by the public
grant ANR-11-LABX- 0037-01 overseen by the French National Research Agency (ANR) as part
of the Investissements d'avenir program (reference: ANR-11-IDEX-0004-02).

\bibliography{biblio_tot}

\end{document}